\theoremstyle{thmstyleone}%
\newtheorem{theorem}{Theorem}
\newtheorem{lemma}[theorem]{Lemma}
\theoremstyle{thmstyletwo}%
\newtheorem{example}{Example}%
\newtheorem{remark}{Remark}%
\theoremstyle{thmstylethree}%
\newtheorem{definition}{Definition}%
\renewcommand{\vec}{\boldsymbol}
\newcommand{\vc}{\vec{c}}
\newcommand{\ka}[2]{\kappa_{#1}^{#2}}
\begin{document}

\title{Efficient encoding and decoding algorithm for a class of perfect single-deletion-correcting permutation codes}



\author{\begin{center}Minhan Gao and Kenneth W. Shum\footnote{Corresponding author. Email: wkshum [at] cuhk.edu.cn} \\
School of Science and Engineering \\
The Chinese University of Hong Kong (Shenzhen) \end{center}}


\abstract{
A permutation code is a nonlinear code whose codewords are permutation of a set of symbols. We consider the use of permutation code in the deletion channel, and consider the symbol-invariant error model, meaning that the values of the symbols that are not removed are not affected by the deletion.
In 1992, Levenshtein gave a construction of perfect single-deletion-correcting permutation codes that attain the maximum code size. Furthermore, he showed in the same paper that the set of all permutations of a given length can be partitioned into permutation codes so constructed. This construction relies on the binary Varshamov-Tenengolts codes. In this paper we give an independent and more direct proof of Levenshtein's result that does not depend on the Varshamov-Tenengolts code. Using the new approach, we devise efficient encoding and decoding algorithms that correct one deletion.
}

\keywords{Permutation codes, Varshamov-Tenengolts code, rank modulation}

\maketitle

\section{Introduction}

A permutation code over an alphabet of size $n$ is a code in which each codeword contains all elements in the alphabet once and exactly once. The basic properties of permutation codes, with the use of the Hamming distance as a design metric, have been documented in prior works~\cite{Blake74, Blake79}. Moreover, a table of permutation codes with their Hamming distances can be found in~\cite{Smith2012}. 

One key feature of permutation codes is that all symbols are used equally often, and thus can be used to combat jamming. As a result, permutation codes have found use in frequency-hopping digital communication systems, such as power-line communication~\cite{Chu2004}.  More recently, permutation codes have also been applied in wireless communication with short block lengths~\cite{Faure2022, Yang2023}.

Another important application of permutation codes is in the context of Flash memory~\cite{FSM13}. In Flash memory, data is encoded using the relative voltage levels between memory cells, rather than the absolute voltage levels, due to the difficulty in controlling the latter with high accuracy. This naturally leads to the use of permutation codes, as each codeword corresponds to a unique arrangement of the voltage levels. For this application, distance measures such as the Ulam distance and the Kendall's $\tau$ distance can model the system more accurately than the Hamming distance.

In the literature, deletions in permutation codes can be classified into two categories: stable deletion and unstable deletion~\cite{Gabrys16}. In the case of stable deletion, the values of the intact symbols are not affected by the deleted symbol. If the codeword is a permutation, the receiver knows which symbols are missing, but does not know the locations where the deletions occurred. This error model has applications in DNA-based storage, for example. An excellent introduction to single-permutation-correcting permutation codes can be found in~\cite{Sloane02}. In the second, ``unstable'' model, the receiver does not have information about the absolute values of the remaining symbols, but only knows their relative values. Consequently, the channel output is a permutation with a shorter block length. This paper will focus on permutation codes designed for the correction of stable deletions.

In the context of combinatorics, permutation codes can be interpreted as directed $t$-designs and directed $t$-packings~\cite{Mathon1999}. This connection stems from the fact that a permutation code can correct any $d$ deletions if and only if all subsequences of the codewords of length $n-d$ are distinct. A {\em directed $t$-packing} consists of a ground set $\mathcal{X}$ and a collection $\mathcal{B}$ of ordered $k$-subsets of $\mathcal{X}$, where every ordered $t$-subset appears in at most one ordered subset in $\mathcal{B}$. When $|\mathcal{X}|=n=k$, this is equivalent to a permutation code that can correct $n-t$ deletions. Furthermore, if every ordered $t$-subset appears in exactly one ordered $k$-subset in $\mathcal{B}$, then we have a directed $t$-design, or a directed $t$-Steiner system. The corresponding permutation code is also said to be perfect~\cite{Klein2004}.

In~\cite{Levenshtein92}, Levenshtein studied perfect permutation codes that can correct a single deletion~\cite{Levenshtein92}. His construction was based on the binary Varshamov–Tenengolts (VT) code~\cite{VT65}. In this paper, we provide a more direct proof of the correctness of this construction, and devise fast encoding and decoding algorithms that do not rely on the VT code.

We review the basic definitions of the VT code below. A binary VT code of length $n$ is a set of binary vectors $(b_1, b_2, \ldots, b_n)$ that satisfy the equation
$$
\sum_{j=1}^{n} j b_j = k \bmod (n+1)
$$
for some constant $k$. The VT code was originally introduced in~\cite{VT65} to combat asymmetric bit errors. Later, Levenshtein showed in~\cite{Levenshtein66} that the VT code can also correct a single deletion, and provided an efficient decoding method. A fast encoding algorithm for the VT code was later given by Abdel-Ghaffar and Ferreira in~\cite{AF98}.

Building on the VT code, Tenengolts introduced a non-binary version and provided a linear-time decoding algorithm~\cite{Tenengolts84}. Leveraging this non-binary VT code, Levenshtein proved in~\cite{Levenshtein92} that by restricting the codewords to permutations, the result is a perfect 1-deletion permutation code. We will refer to this construction as the Levenshtein's permutation code in this paper.

Given a vector $(x_0, x_1, \ldots, x_{n-1})$, where $x_j$ are integers between 0 and $n-1$, the {\em signature} of this vector is defined as a binary vector of length $n-1$, where the $j$-th component is 1 if $x_j \geq x_{j-1}$, and 0 if $x_j < x_{j-1}$, for $j=1,2,\ldots,n-1$. Levenshtein proved in~\cite{Levenshtein92} that the set of all permutations of length $n$ whose signatures are codewords in a VT code of length $n-1$ forms a permutation code that can correct a single deletion. Furthermore, Levenshtein also showed in the same paper that the set of all permutations of length $n$ can be partitioned into $n$ perfect 1-deletion permutation codes. We will see that this latter result also follows from the new approach proposed in this paper.

While Levenshtein's permutation code construction is well-studied, efficient encoding and decoding algorithms have not been readily available until recently. The work in~\cite{Sun23} provided a linear-time encoding algorithm, but the corresponding decoding algorithm may take $O(n^2)$ time. In this paper, we present both encoding and decoding algorithms that have fast implementations.  Encoding can be done in $O(n^{1+\epsilon})$ time, where $n$ is the length of the permutation, and $\epsilon>0$ is a constant. The time complexity of recovering the data from a deleted codeword is $O(n\log n)$. So, both encoding and decoding have quasi-linear time complexity.

For encoding and decoding algorithms for other nonlinear codes that can correct deletions and insertions, we refer the readers to in~\cite{cai2021correcting, Gabrys2022} for more information.

The extension from permutation codes to multi-permutation codes is a rapidly growing area of research. Motivated by applications in DNA storage, codes that can correct burst deletions are of great practical interest. A codeword in a multi-permutation code contains all the symbols a constant number of times, thus enabling more flexibility in code design in compare to permutation codes. Burst-deletion-correcting permutation and multi-permutation codes have been studied extensively in recent works~\cite{Chee2020, TH2022, Han23, Sun23, Wang24}. Constructions of codes that can correct both deletions and substitutions have also been investigated~\cite{Song2020, smagloy2023single}.

This paper is organized as follows. In Sections~\ref{sec:preliminaries} and~\ref{sec:transposition}, we review the preliminaries for permutations, the symmetric group, and permutation codes. Using a representation of permutations described in Section~\ref{sec:phi}, we then give an alternate description of Levenshtein's permutation code construction in Section~\ref{sec:construction}. The main theorem of this paper is presented in Section~\ref{sec:main}, and the proof of a technical theorem is given in Section~\ref{sec:proof_of_theorem_5}. Finally, in Section~\ref{sec:algorithm}, we describe efficient encoding and decoding algorithms for Levenshtein's permutation code.

\section{Permutation codes with deletion metric}

\label{sec:preliminaries}

A {\em permutation} is a bijective function from a finite set to itself. Given two permutations $\pi$ and $\sigma$ on the finite set $A$, we define the {\em composition} of $\pi$ and $\sigma$ by function composition, and is denoted by $\pi\circ\sigma$, or simply $\pi \sigma$. It is the function defined by $\pi\sigma(x)=\pi(\sigma(x))$ for $x\in A$. 

The set of all permutations on a fixed finite set under composition operation forms a group. The identity permutation $e$ is the identity function, in which every element is mapped to itself. Given any permutation $\pi$ on a finite set, the inverse $\pi^{-1}$ is the permutation such that $\pi \pi^{-1}=e$.

In this paper we consider permutations defined on an $n$-set $\{0,1,2,\ldots, n-1\} \triangleq [n]$. We can represent a permutation $\pi:[n]\rightarrow [n]$ by an array of numbers:
$$
\left( \begin{matrix} 0 \\ 
	\pi(0)\end{matrix} \ 
\begin{matrix} 1 \\ \pi(1)\end{matrix} \ 
\begin{matrix} 2 \\  \pi(2)\end{matrix} \ 
\begin{matrix} \cdots \\ \cdots \end{matrix}
\begin{matrix} n-1 \\  \pi(n-1)\end{matrix}  
\right).
$$
More compactly, we will just write the second row as a vector,
$$
	(\pi(0), \pi(1), \ldots, \pi(n-1)).
$$
This is called the {\em vector representation} of a permutation. The group of all permutations on $[n]$ is called the {\em symmetric group} on $[n]$, and will be denoted by $S_n$. With a slight abuse of notation, we will  use the notation  $(a_0, a_1,\ldots, a_{n-1}) \in S_n$ to mean a bijection that maps $i$ to $a_i$, for $i=0,1,\ldots, n-1$. The parameter $n$ is called the {\em length} of the permutation.

A {\em permutation code} of length $n$ is a subset of permutations in $S_n$. Each permutation in a permutation code is called a {\em codeword}. If we represent the codewords by vector, then a permutation code consists of vectors, and in codeword the symbols $0$, $1$,\ldots, $n-1$ appear once and exactly once. If we consider the Hamming distance, then a permutation code with Hamming distance at least $d$ is a subset $\mathcal{C}$ of $S_n$, such that for any pair of distinct permutations $\pi$ and $\sigma$ in $\mathcal{C}$ agree in at most $n-d$ position, i.e.,
$$
 |\{i\in [n]:\, \pi(i) = \sigma(i)\}| \leq n-d.
$$

In this paper we will consider another metric motivated from coding for the Flash memory. Given a vector of $n$ symbols $(a_0,a_1,\ldots, a_{n-1})\in S_n$, we say that there is a deletion at the $i$-th symbol if what we can observe is the vector
$$
(a_0, a_1,\ldots, a_{i-1}, a_{i+1}, \ldots, a_{n-1}).
$$
The absolute values of the other symbols are not affected, but we do not know the location of the deletion.

If it is known that the original codeword is a permutation, the symbol that is deleted can be uniquely determined. If we know that symbol $s$ is deleted, we can recover the original codeword if the subsequences obtained by removing symbol $s$ from the codewords are all distinct.  We say that a permutation code is {\em single-deletion-correcting} if we can recover the codeword from the deletion of any symbol.

As a result, the maximal cardinality of a single-deletion-correcting permutation code is upper bounded by $(n-1)!$. A single-deletion-correcting permutation code is said to be {\em perfect} if it has cardinality $(n-1)!$.

\begin{example}
	The following 6 codewords in $S_4$ form a perfect single-deletion-correcting permutation codes:
\begin{gather*}
(0,1,2,3), \ (3,0,2,1), \ (3,1,2,0), (2,0,3,1),\ (2,1,3,0),\ (1,0,3,2).
\end{gather*}
We note that if we delete the symbol `0' from each codeword, the resulting sequences are all distinct. Likewise, if we delete symbol `1` (or `2' or `3'), we will obtain six distinct sequences.
\end{example}

\section{Transposition and block  transposition}
\label{sec:transposition}
We introduce several types of important permutations in this section. In the followings, all permutations have length~$n$.

Given two distinct integers between 0 and $n-1$,
we denote the permutation that exchanges the two symbols at locations $i$ and $j$ by $\sigma_{i,j}$, and call it a {\em transposition}. More formally, a transposition $\sigma_{i,j}$ is a bijection defined by $\sigma_{i,j}(i)=j$ and $\sigma_{i,j}(j)=i$ for some $i\neq j$, and $\sigma_{i,j}(x) = x$ for all $x\in[n]\setminus \{i,j\}$. A transposition is called an {\em adjacent transposition} if $j=i+1$. 

A permutation is called a {\em right cyclic shift} if it sends $i$ to $i+1 \bmod n$, for $i=0,1,\ldots, n-1$. In vector notation, a right cyclic shift is presented by
$$
(1,2,\ldots, n-1,0).
$$
We call it a right cyclic shift because it moves the first symbol to the far right. Likewise, a {\em left cyclic shift} is the permutation that maps $i$ to $i-1 \bmod n$ for all~$i$.

A {\em translocation} is a permutation in the form
$$
(0,1,\ldots, i-1, i+1, i+2,\ldots, k-1, k, i, k+1, \ldots, n-1)
$$
were $i$ and $k$ are indices with $i<k$.

Motivated by applications in DNA, we consider another class of permutations that exchange two block of symbols. Let $i$, $j$ and $k$ be three integers between 0 and $n-1$, such that $i<j<k$. We define a {\em block transposition} as a power of a translocation, and it has the form
$$
(0,1,\ldots, i-1, \underbrace{j,j+1,\ldots, k-1},  \underbrace{i,i+1,\ldots, j-1}, k,k+1,\ldots, n-1)
$$
for some $i<j<k$.
When $j=i+1$ and $k=j+1$, a block transposition reduces to a transposition. A block transposition is also called a {\em generalized adjacent transposition}~\cite{Wang2024}.

When $k=n$, we will refer it as {\em suffix block transposition}. For any integer $i=0,1,\ldots, n-1$, and integer $s$ between 1 to $n-i-1$, we use the notation $\kappa_{i,s}$ to stand for the permutation
$$
(0,1,\ldots, i-1, \underbrace{i+s,i+s+1,\ldots, n-1}_{n-i-s},  \underbrace{i,i+1,\ldots, i+s-1}_{s}).
$$
Moreover, we define $\kappa_{i,n-i}$ as the identity permutation. We note that the suffix block transpositions 
$$
\kappa_{i,1},\ \kappa_{i,2}, \ldots, \kappa_{i,n-i}
$$
form a subgroup of $S_n$ with order $n-i$.

\smallskip

Let $\rho$ be an arbitrary permutation in $S_n$, represented in vector notation as
$(\rho_0, \rho_1,\ldots, \rho_{n-1})$. The effect of compose $\rho$ with a transposition $\sigma_{i,j}$ on the right is to exchange the $i$th and $j$th symbols in $\rho$,
$$
\rho \circ \sigma_{i,j} = (\rho_0,\rho_1,\ldots, \rho_{i-1},\rho_j, \rho_{i+1}, \ldots, \rho_{j-1}, \rho_i, \rho_{j+1}, \ldots).
$$
It is the bijection that maps $i$ to $\rho_j$, $j$ to $\rho_i$, and $x$ to $\rho_x$ for $x\in [n]\setminus\{i,j\}$. On the other hand, if we apply transposition $\rho_{i,j}$ to the right of $\rho$, we exchange the two symbols $i$ and $j$ in $\rho$ instead.

Similar remarks go with translocation. In general, composing a translocation on the right is an operation about the locations. Composing a translocation on the right is about the symbol values instead.

\begin{example}
Suppose $\rho$ is the permutation $(2,0,4,1,5,3)$. If we apply transposition $\sigma_{3,4}$ on the right, we obtain
$$
(2,0,4,1,5,3)\circ \sigma_{3,4} = (2,0,4,5,1,3).
$$
If we apply $\sigma_{3,4}$ on the left, the result is
$$
\sigma_{3,4} \circ (2,0,4,1,5,3) = (2,0,3,1,5,4).
$$\end{example}

In this paper, we will interpretation a deletion as a translocation. If the symbol of permutation $\pi$ at position $i$ is deleted, we imagine that the missing symbol is moved to the end of the permutation, and the corresponding permutation is
$$
\pi \circ \kappa_{i,1} = (\pi_{0},\pi_{1},\ldots, \pi_{i-1}, \pi_{i+1}, \pi_{i+2},\ldots, \pi_{n-1}, \pi_{i}).
$$
The subsequence consisting of the first $n-1$ symbols in $\pi \circ \kappa_{i,1}$ are precisely the subsequence obtained by deleting the symbol a position $i$ from~$\pi$.

Given a permutation $\pi$, define the 1-deletion ball centered at $\pi$ with radius 1 as
$$
B(\pi,1) \triangleq \{\pi\circ \kappa_{i,1}:\, i=0,1,2,\ldots, n-1\}.
$$
As a result, a subset $\mathcal{C}$ of permutations in $S_n$ can correct any single deletion if and only if the balls $B(\pi,1)$ for $\pi$ in $\mathcal{C}$ are mutually disjoint. If the balls $B(\pi,1)$, for $\pi\in\mathcal{C}$, form a partition of $S_n$, then $\mathcal{C}$ is a perfect 1-deletion permutation code.

\section{A representation of permutations using suffix block transpositions}
\label{sec:phi}

The construction of perfect single-deletion-correcting permutation codes in this paper stems from a special representation of permutation that is very similar to~\cite{Arnow1990}. 
We will establish a one-to-one correspondence between permutation of length $n$ and a vector of the form $[a_0, a_1,\ldots, a_{n-1}]$, where $a_j$ is an integer between 1 and $j+1$, for each $j=0,1,\ldots, n-1$. Hence, $a_0$ is always equal to 1, $a_1$ may take 1 or 2 as its value, etc. Let $V_n$ denote the set of vectors
\begin{equation}
	V_n \triangleq \{[a_0,a_1,\ldots, a_{n-1}]:\, a_j=1,2,\ldots, j+1, \text{for } j=0,1,\ldots n-1\}.
	\label{eq:V_n}
\end{equation}

We consider the mapping that maps a vector  $\alpha = [a_0,a_1,a_2,\ldots, a_{n-1}]$ in $V_n$  
to a permutation in $S_n$,
\begin{equation}
	[a_0,a_1,a_2,\ldots, a_{n-1}] \mapsto  \kappa_{n-2,a_1} \kappa_{n-3, a_2} \cdots \kappa_{1, a_{n-2}} \kappa_{0, a_{n-1}}.
	\label{eq:inverse_rep}
\end{equation}

We claim that any permutation $\pi = (\pi_0,\pi_1,\ldots, \pi_{n-1})$ in $S_n$ can be written as an image of this map. Let $\pi$ be an arbitrary permutation of length~$n$. We first recall that $\kappa_{0,s}$ is a cyclic shift, for $s=0,1,\ldots, n-1$.  We choose a value of $s$, and call it $a_{n-1}$, such that  $\pi \kappa_{0,a_{n-1}}^{-1}$ has the symbol 0 located at the first position.  Next, we note that a block transposition in the form $\kappa_{1,s}$ fixes the symbol at position 0, and cyclically shift the remaining symbols. We can choose the unique integer $a_{n-2}$ between 1 and $n-1$ such that the second position in $\pi \kappa_{0,a_{n-1}}^{-1} \kappa_{1,a_{n-2}}^{-1}$ is~1. We repeat this process and obtain $a_j$, for $j=2,3,\ldots, n-1$, so that
$$
\pi \kappa_{0,a_{n-1}}^{-1} \kappa_{1,a_{n-2}}^{-1} \kappa_{2,a_{n-3}}^{-1} \cdots \kappa_{n-2,a_1}^{-1}
$$
is the identity permutation.
 $(0,1,2,\ldots, n-1)$. Hence, every permutation $\pi$ can be expressed as
$$
\pi = \kappa_{n-2,a_1} \kappa_{n-3, a_2} \cdots \kappa_{1, a_{n-2}} \kappa_{0, a_{n-1}}
$$
for some $[a_0,a_1,\ldots, a_{n-1}]$ in $V_n$.

Since the mapping in \eqref{eq:inverse_rep} is surjective, and $V_n$ and $S_n$ has the same cardinality, this map is indeed a bijection between $V_n$ and $S_n$. As a result, every permutation can be represented uniquely by a numerical vector in $V_n$. We record this fact in the following theorem.

\begin{theorem}
	Let $V_n$ denote the set of vectors as in \eqref{eq:V_n}. The map in \eqref{eq:inverse_rep} is a bijection from the set $V_n$ defined in \eqref{eq:V_n} to the symmetric group $S_n$.
\end{theorem}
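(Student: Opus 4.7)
The plan is to verify that the map in \eqref{eq:inverse_rep} is surjective and then invoke a counting argument, since $|V_n| = \prod_{j=0}^{n-1}(j+1) = n! = |S_n|$. Thus establishing surjectivity (equivalently, injectivity) will suffice. The excerpt already sketches the surjective construction; my task would be to turn it into a clean induction and verify at each stage that the parameter $a_{n-1-j}$ is uniquely determined and lies in the correct range $\{1,2,\ldots,n-j\}$.

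First I would record the key structural fact about $\kappa_{i,s}$ that drives everything: when we right-compose an arbitrary $\rho \in S_n$ with $\kappa_{i,s}$, the first $i$ symbols of $\rho$ are preserved, while the symbols in positions $i, i+1, \ldots, n-1$ undergo a cyclic left shift by $s$. This follows directly from the formula $\rho\kappa_{i,s}(x) = \rho(\kappa_{i,s}(x))$ together with the explicit description of $\kappa_{i,s}$ given in Section~\ref{sec:transposition}. In particular, the set $\{\kappa_{i,s}^{-1} : s = 1,2,\ldots, n-i\}$ acts simply transitively on the positions of the last $n-i$ symbols of $\rho$.

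Next, I would prove surjectivity by induction on $j = 0, 1, \ldots, n-1$. The inductive hypothesis after stage $j-1$ is that integers $a_{n-1}, a_{n-2}, \ldots, a_{n-j}$ (with $a_{n-k} \in \{1,\ldots,n-k+1\}$) have been chosen so that
\[
\pi^{(j)} \triangleq \pi\, \kappa_{0,a_{n-1}}^{-1}\kappa_{1,a_{n-2}}^{-1}\cdots\kappa_{j-1,a_{n-j}}^{-1}
\]
has $\pi^{(j)}(k) = k$ for all $0 \le k \le j-1$. At stage $j$, because $\pi^{(j)}$ is a permutation with the first $j$ positions already occupied by $0,1,\ldots,j-1$, the symbol $j$ must appear in some position $p \in \{j, j+1, \ldots, n-1\}$. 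By the structural fact above, right-composition by $\kappa_{j,s}^{-1}$ fixes positions $0,1,\ldots,j-1$ and cyclically permutes the remaining $n-j$ positions; hence there is a unique $s = a_{n-1-j} \in \{1,2,\ldots,n-j\}$ that moves symbol $j$ from position $p$ to position $j$, with the convention $a_{n-1-j} = n-j$ (meaning $\kappa_{j,n-j}^{-1} = e$) exactly when $p = j$. This produces $\pi^{(j+1)}$ satisfying the inductive hypothesis at the next stage.

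After stage $n-1$, the permutation $\pi^{(n)}$ equals the identity, so rearranging gives
\[
\pi = \kappa_{n-2,a_1}\kappa_{n-3,a_2}\cdots\kappa_{1,a_{n-2}}\kappa_{0,a_{n-1}},
\]
which is the image of $[a_0,a_1,\ldots,a_{n-1}] \in V_n$ (where $a_0 = 1$ is forced). Thus the map is surjective, and since $|V_n| = |S_n| = n!$, it is a bijection. The main subtlety is keeping the direction of the cyclic shift and the range of $s$ straight; once the structural fact about $\kappa_{i,s}$ is stated cleanly, the rest is a mechanical induction with no real obstacle.
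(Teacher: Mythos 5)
Your proposal is correct and follows essentially the same route as the paper: establish surjectivity by inductively choosing each $a_{n-1-j}$ so that right-composition with $\kappa_{j,a_{n-1-j}}^{-1}$ places symbol $j$ in position $j$, then conclude bijectivity from the cardinality count $|V_n| = |S_n| = n!$. Your write-up merely makes the paper's sketch more explicit by isolating the structural fact about how $\kappa_{i,s}$ acts on positions and phrasing the construction as a formal induction.
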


\begin{definition}
	For any $\pi$ in $S_n$, we denote the unique vector $[a_0,a_1,\ldots, a_{n-1}]$ in $V_n$ that corresponds to $\pi$  by $R(\pi)$.
\end{definition}

The representation of the 24 permutations in $S_4$ is given in  Table~\ref{table:S4}.

\begin{table}
	\caption{Representation of $S_4$ using the bijection in \eqref{eq:inverse_rep}}
	\label{table:S4}
	\begin{tabular}{|c|c|| c|c|| c|c|| c|c| } \hline
		$\pi$ & $R(\pi)$ & $\pi$ & $R(\pi)$ & $\pi$ & $R(\pi)$ & $\pi$ & $R(\pi)$\\ \hline
		$(3,2,1,0)$ & [1,1,1,1] & (2,1,0,3) & [1,1,1,2] & (1,0,3,2) & [1,1,1,3]  & (0,3,2,1) & [1,1,1,4]\\
		$(2,3,1,0)$ & [1,2,1,1] & (3,1,0,2) & [1,2,1,2]& (1,0,2,3) & [1,2,1,3]& (0,2,3,1) & [1,2,1,4]\\
		$(2,1,3,0)$ & [1,1,2,1] & (1,3,0,2) & [1,1,2,2]& (3,0,2,1) & [1,1,2,3]& (0,2,1,3) & [1,1,2,4]\\
		$(3,1,2,0)$ & [1,2,2,1] & (1,2,0,3) & [1,2,2,2]& (2,0,3,1) & [1,2,2,3]& (0,3,1,2) & [1,2,2,4]\\
		$(1,3,2,0)$ & [1,1,3,1] & (3,2,0,1) & [1,1,3,2]& (2,0,1,3) & [1,1,3,3]& (0,1,3,2) & [1,1,3,4]\\
		$(1,2,3,0)$ & [1,2,3,1] & (2,3,0,1) & [1,2,3,2]& (3,0,1,2) & [1,2,3,3]& (0,1,2,3) & [1,2,3,4] \\ \hline
	\end{tabular}
\end{table}

Given a vector $[a_0,a_1,\ldots, a_{n-1}]$ in $V_n$, we can generate the corresponding permutation in $n-1$ steps. In the $j$-th step, where $j=1,2,\ldots, n-1$, we apply the action of $\kappa_{n-1-j,a_{j}}$ from the right.

\begin{example} \label{ex:alpha_to_permutation}
	We illustrate how to obtain the permutation associated to a vector in $V_n$. The vector $[a_0,a_1,a_2,a_3,a_4]=[1,1,1,3,2]$ represents a permutation in $S_5$. To obtain this permutation, we start from the identity permutation $(0,1,2,3,4)$. 
	
	Step 1: As $a_1=1$, we exchange 3 and 4 to obtain
	$$(0,1,2,4,3).$$
	
	Step 2:  Because $a_2=1$, we apply a block transposition to the last three symbols to get
	$$
	(0,1,4,3,2).
	$$
	
	Step 3: Apply $\kappa_{2,3}$. The permutation after this step is
	$$
	(0,2,1,4,3)
	$$
	
	Step 4: Finally, we cyclically shift the permutation until the symbols is located at position 3. The final result is
	$$
	(1,4,3,0,2).
	$$
\end{example}

In the previous example we can see that the relative positions of symbols $2$, $3$, and $4$ are affected by the value of $a_1$ only. If we cyclically shift the permutations $(0,1,4,3,2)$, $(0,2,1,4,3)$ and $(1,4,3,0,2)$ in steps 2, 3 and 4. such that the the symbol 2 is on the first position and read out the symbols from left to right, the order of getting the three symbols 2, 3, 4 is 2, 4 and 3.

In general, each component in $R(\pi)=[a_0,a_1,\ldots, a_{n-1}]$ can be determined from the vector representation of $\pi$ using the following procedure. Suppose we want to compute $a_j$, where $j$ ranges from 0 to $n-2$. We cyclically rotate permutation $\pi$ until symbol $n-1-j$ is in the first position on the left. Then $a_j$ is the number of symbol to the left of symbol $n-2-j$ that is strictly larger than $n-2-j$. For $j=n-1$, we append an extra symbol `$e$' at the end of $\pi$, and cyclically shift the resulting sequence until the first position is the symbol $0$. We set $a_{n-1}$ to be the number of symbols that are to the left of the symbol `$e$'.

We note that the smallest value of $a_j$ is least 1, because the first symbol $n-1-j$ is larger than $n-2-j$, and the largest value is $j+1$.

\begin{example} \label{ex:e}
	Consider the permutation $\rho = (3,4,0,1,2)$. 
	We compute the components $a_j$ in  $R(\rho)=[a_0,a_1,\ldots,a_4]$ below.
	
	$j=0$. Cyclically shift $\rho$ to $(4,0,1,2,3)$. The only symbol to the left of 3 that is larger than 3 is the symbol 4. Hence, $a_0=1$.
	
	$j=1$. Because the permutation $\rho$ starts with the symbol 3, we do not need to perform any cyclic shift. The value of $a_1$ is equal to 2, as both symbols 3 and 4 are located on the left side of symbol~2.
	
	$j=2$. Cyclically shift $\rho$ to  $(2,3,4,0, 1)$. We see that symbols 2, 3 and 4 are to the left of 1. Consequently, we have $a_2=3$.
	
	$j=3$. Cyclically rotate $\rho$ to $(1,2,3,4,0)$. Since the symbol 0 is at the end of the sequence, there are four symbols to the left of 0 that are larger than 0, and we have $a_3=4$. 
	
	$j=4$. After appending an extra symbol $e$ at the end, we cyclically shift it such that the first symbol is 0, i.e.,  $(0,1,2,e,3,4)$. Symbols 0, 1 and 2 are located to the left of~$e$. Therefore $a_4 = 3$.
	
	In conclusion, the representation $R(\rho)$ is $[1,2,3,4,3]$.
\end{example}

Because the positions of a symbol can be obtained from the inverse permutation $\pi^{-1}$, we can formulate the procedure in the previous paragraphs in terms of~$\pi^{-1}$. In the followings, we will use the notation $[x\bmod n]$ to  stands for the unique integer between 0 and $n-1$ such that $x - [x\bmod n]$ is divisible by~$n$.

\begin{theorem}
	Let $\pi$ be a permutation $\pi$ in $S_n$, and $(c_0,c_1,\ldots, c_{n-1})$ be the inverse of~$\pi$. Define $c_{-1}=0$.	
	The components $a_i$ in the vector $R(\pi)$, for $i=0,1,\ldots, n-1$ is equal to the number of indices $k\in\{0,1,2,\ldots, i\}$ such that
	\begin{equation}
		[c_{n-1-k} - c_{n-1-i} \bmod n] \leq [c_{n-2-i} - c_{n-1-i} \bmod n],
		\label{eq:alpha}
	\end{equation}
	and moreover, we have $a_{n-1} = n - c_{0}$.
	\label{thm:R_using_inverse}
\end{theorem}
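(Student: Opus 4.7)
My plan is to unfold the geometric procedure for computing $R(\pi)$ laid out in the two paragraphs preceding the theorem and transcribe each step into modular arithmetic on the entries of $\pi^{-1}$. The single bridge between the two viewpoints is the elementary observation that, since $c_m$ is the position of symbol $m$ in $\pi$, the cyclic shift that places symbol $s$ at position $0$ moves every symbol $m$ to the new position $[c_m - c_s \bmod n]$.

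For the range $0 \leq i \leq n-2$, the procedure cyclically shifts $\pi$ so that $n-1-i$ is at the front and counts the symbols strictly greater than $n-2-i$ that appear to the left of $n-2-i$. The set of symbols strictly greater than $n-2-i$ is exactly $\{n-1-k : k = 0, 1, \ldots, i\}$, so applying the bridge, $a_i$ equals the number of $k \in \{0, 1, \ldots, i\}$ satisfying
$$[c_{n-1-k} - c_{n-1-i} \bmod n] < [c_{n-2-i} - c_{n-1-i} \bmod n].$$
I would then argue that this strict inequality is equivalent to the non-strict one in~\eqref{eq:alpha}: equality would force $c_{n-1-k} = c_{n-2-i}$, which is impossible because distinct symbols of $\pi$ have distinct positions and $n-1-k \neq n-2-i$ for every $k$ in the allowed range. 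The case $k = i$ corresponds to the symbol $n-1-i$ itself, which lies at position $0$ after the shift and always satisfies the condition, providing the guaranteed unit contribution that ensures $a_i \geq 1$.

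For $i = n-1$, the procedure is slightly different: one appends an auxiliary symbol $e$ at position $n$ of $\pi$, cyclically rotates the resulting length-$(n+1)$ sequence until symbol $0$ is at the front, and defines $a_{n-1}$ as the number of symbols preceding~$e$. The same position-shift argument, now in length $n+1$, sends $e$ to position $[n - c_0 \bmod (n+1)] = n - c_0$, and hence $a_{n-1} = n - c_0$. The main subtle point in the whole proof is the strict-versus-non-strict distinction in the inequality, which the distinctness argument resolves cleanly; the remaining work is routine bookkeeping of modular reductions.
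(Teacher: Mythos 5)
Your proof is correct and takes essentially the same route as the paper's: both translate the cyclic-rotation procedure into modular position arithmetic via $\pi^{-1}$, and your write-up is in fact the more careful one (the strict-versus-non-strict inequality and the length-$(n+1)$ shift sending $e$ to position $n-c_0$ are exactly the details the paper's two-sentence sketch glosses over). You also sensibly establish the counting formula only for $i\le n-2$ and obtain $a_{n-1}$ from the separate appended-symbol procedure, which is the right reading: the paper's parenthetical claim that the counting formula is also consistent at $i=n-1$ (all $k$ satisfying the condition) fails in general, e.g.\ when $c_0=0$, so nothing is missing from your argument.
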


\begin{proof}
By the definition of inverse permutation, the value of $c_i$ is the position of symbol $i$ in $\pi$. The left-hand side of \eqref{eq:alpha} represents the relative distance from symbol $n-1-i$ to symbol $n-2-i$. The condition in \eqref{eq:alpha} holds if the symbol $n-1-k$ is within the range from $n-1-i$ to $n-2-i$, and we count the number of symbols in $\{n-1, n-2, \ldots, n-1-i\}$ that satisfy this condition.

We can see that $a_{n-1} = n-c_0$ from the definition of $a_{n-1}$, because $c_0$ represents the location of the symbol 0 in $\pi$. We can check that when $i=n-1$, all indices $k$ in $\{0,1,\ldots, n-1\}$ satisfy the condition in~\eqref{eq:alpha}. Hence, the calculation in Theorem~\ref{thm:R_using_inverse} is consistent with the definition of $a_{n-1}$ in~$R(\pi)$.
\end{proof}

\begin{remark}
We remark that there is at least one $k$ in the range between 0 and $i$ that satisfies the condition in \eqref{eq:alpha}. When $k=i$, the left-hand side is equal to 0, and hence the condition in \eqref{eq:alpha} must hold. This is consistent with the fact that $a_i$ is larger than equal to~1.
\end{remark}

The condition in Theorem~\ref{thm:R_using_inverse} can be written equivalently in several equivalent forms. For example, $a_i$ is also equal to the number of indices $k$ from 0 to $i$ such that
\begin{equation}
[c_{n-1-k} - c_{n-2-i} \bmod n] \leq [c_{n-1-i} - c_{n-2-i} \bmod n].
\label{eq:alpha2}
\end{equation}

\section{An alternate description of Levenshtein's permutation codes}
\label{sec:construction}
Using the representation $R$ defined in the previous section, we give an alternate description of Levenshtein's construction of permutation code.

\smallskip

\noindent {\bf Construction.} Let $n$ be a positive integer, and  $T$ be any integer.  We define $\mathcal{W}_{T,n}$ as the subset of $V_n$ consisting of vectors $[a_0, a_1, a_2,\ldots, a_{n-1}]$ that satisfy
$$
a_0+a_1+a_2+\cdots + a_{n-1} \equiv T \bmod n.
$$
We construct a permutation code $\mathcal{C}_{T,n}$ by taking the inverse of the permutations corresponding to the vector in $\mathcal{W}_{T,n}$,
\begin{equation}
\mathcal{C}_{T,n} \triangleq \{(R^{-1}(\alpha))^{-1}: \alpha \in \mathcal{W}_{T,n}\}.
\label{eq:construction}
\end{equation}

The inverse  function $R^{-1}$ is well-defined, as $R$ is a bijective function.

\noindent {\it Remark:} The value of $a_0$ is  constantly equal to 1. Even though it contains no information, we will see later that the inclusion is $a_0$ is crucial, and it can simplify the proof significantly. Moreover, since $T$ is an arbitrary constant, whether we add $a_0$ on the left-hand side has no effect on the theory we are going to develop below.

\begin{example}
Consider code length $n=4$. By the above construction, we can construction four 1-deletion permutations code, corresponding to $T=0$, $T=1$, $T=2$, and $T=3$. We list the codewords in Table~\ref{table:examples_of_length_4}.
\end{example}

\begin{table}
\caption{Four perfect 1-deletion permutation codes.}
\label{table:examples_of_length_4}
	\begin{tabular}{|c|c || c|c || c|c || c|c|} \hline
$\pi$ in $\mathcal{C}_{0,4}$ & $R(\pi)$  & 	$\pi$ in $\mathcal{C}_{1,4}$ & $R(\pi)$ & $\pi$ in $\mathcal{C}_{2,4}$ & $R(\pi)$ & 	$\pi$ in $\mathcal{C}_{3,4}$ & $R(\pi)$\\ \hline
(3,2,1,0) & [1,1,1,1] & (2,1,0,3) & [1,1,1,2] & (1,0,3,2) & [1,1,1,3] & (0,3,2,1) & [1,1,1,4] \\
(0,2,1,3) & [1,1,2,4] & (3,1,0,2) & [1,1,2,1] & (2,0,3,1) & [1,1,2,2] & (1,3,2,0) & [1,1,2,3]\\
(1,2,0,3) & [1,1,3,3] & (0,1,3,2) & [1,1,3,4] & (3,0,2,1) & [1,1,3,1] & (2,3,1,0) & [1,1,3,2]\\
(0,3,1,2) & [1,2,1,4] & (3,2,0,1) & [1,2,1,1] & (2,1,3,0) & [1,2,1,2] & (1,0,2,3) & [1,2,1,3]\\
(1,3,0,2) & [1,2,2,3] & (0,2,3,1) & [1,2,2,4] & (3,1,2,0) & [1,2,2,1] & (2,0,1,3) & [1,2,2,2]\\
(2,3,0,1) & [1,2,3,2] & (1,2,3,0) & [1,2,3,3] & (0,1,2,3) & [1,2,3,4] & (3,0,1,2) & [1,2,3,1]\\ \hline 
	\end{tabular}
\end{table}

A permutation code obtained by this construction  is indeed a Levenshtein's permutation code. We illustrate the proof idea through an example.

We view a permutation as a periodic sequence, by extending the domain $\{0,1,\ldots, n-1\}$ of a permutation to $\mathbb{Z}$ in the natural way, and scan from right to left in the periodic sequence. In the followings, let $\rho^{-1}$ be a given permutation, which is a codeword in a permutation code, and represent $\rho$ as a a periodic sequence. An example for $\rho = (1,3,4,0,5,2)$ is shown in Fig.~\ref{fig:reverse}. The codeword is $\rho^{-1} = (3,0,5,1,2,4)$.

To facilitate the description, we define $\tilde{a}_{j}$ as $a_{n-1-j}$,
for $j=0,1,\ldots, n-1$. In order to make the connection to VT code, we also need to define the signature. Let $(c_0,c_1,\ldots, c_{n-1})$ be the vector representation of $\rho^{-1}$. For $j=1,\ldots, n-1$, we let
$$
\tilde{b}_j \triangleq \begin{cases}
	1 & \text{if } c_j > c_{j-1} \\
	0 & \text{if } c_j < c_{j-1}.
\end{cases}
$$


\begin{figure}
	\begin{center}
		\includegraphics[width=10cm]{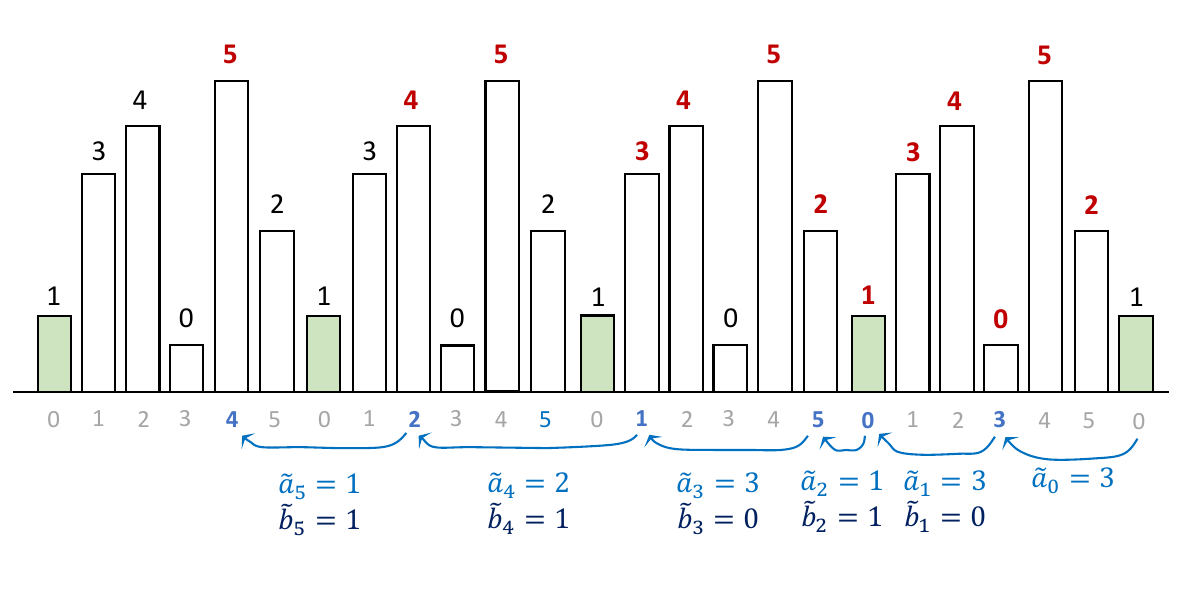}
	\end{center}
	\caption{Reverse calculations codeword is $\rho^{-1} = (3,0,5,1,2,4)$, whose inverse is $\rho = (1,3,4,0,5,2)$. }
	\label{fig:reverse}
\end{figure}


As an example, the calculations of $\tilde{a}_j$ for $\rho= (3,0,5,1,2,4)$ is shown below.

\begin{align*}
	\tilde{a}_0 = 3,  &\qquad  (\mathbf{0},\mathbf{5}, \mathbf{2} ,e,1,3,4) \\
	\tilde{a}_1 = 3, & \qquad  (\mathbf{1},\mathbf{3},\mathbf{4},0,5,2) \\
	\tilde{a}_2 = 1, & \qquad  (\mathbf{2},1,3,4,0,5) \\
	\tilde{a}_3 = 3, & \qquad  (\mathbf{3},\mathbf{4},0,\mathbf{5},2,1) \\
	\tilde{a}_4 = 2, & \qquad  (\mathbf{4},0,\mathbf{5},2,1,3) \\
	\tilde{a}_5 = 1, & \qquad  (\mathbf{5},2,1,3,4,0) 
\end{align*}

The total number of symbols that appear in the procedure in the previous paragraph is exactly equal to the parity of the permutation.  We observe that the symbols in boldface in the first two permutations for $\tilde{a}_0$ and $\tilde{a}_1$ are 0, 1, 2, 3, 4, and 5. The symbols that appear in the third and fourth permutations for $\tilde{a}_2$ and $\tilde{a}_3$ are 2, 3, 4, and 5. In the second last permutations, the symbols 4 and 5  are highlighted, and in the last permutation, only the symbol 5 is counted.


We can arrange th symbols that are counted in a $6\times 4$ table.
$$
\begin{array}{|c|c|c|c|} \hline
	5& 5 & 5 & 5 \\ \hline
	4& 4 & 4 & * \\ \hline
	3& 3 & * & * \\ \hline
	2& 2 & * & * \\ \hline
	1& * & * & * \\ \hline
	0& * & * & * \\ \hline 
\end{array}
$$
Note that we start a new column whenever we have $c_j > c_{j-1}$, i.e., when $b_j = 1$.

If we filling the rest of the table by asterisks, then the number of asterisks is equal to
$$
\sum_{j=1}^{n-1} j \tilde{b}_j.
$$
Since the total number of asterisks and the integers in the table is a multiple of $n$, we have
$$
\sum_{j=0}^{n-1} \tilde{a}_j + \sum_{j=1}^{n-1} j \tilde{b}_j \equiv 0 \bmod n.
$$

Finally, we note that the value of $\tilde{a}_j$ is equal to $a_{n-1-j}$. Hence, the sum of $\tilde{a}_j$ over all $j$ is the same as the sum of $a_j$ over all $j$,
$$
\sum_{j=0}^{n-1} a_j = \sum_{j=0}^{n-1} \tilde{a}_j,
$$
and is equal to the parity of the permutation. If we construct a permutation code using the construction in Section~\ref{sec:construction}, the parity $\sum_{j=0}^{n-1} a_j$ is a constant $T$.
Consequently, the sum  $\sum_{j=1}^{n-1} j \tilde{b}_j$ is equal to a constant $-T \bmod n$. This proves that the permutation codes constructed in this section are Levenshstein codes.

\section{Perfect 1-deletion permutation code}
\label{sec:main}

The next theorem is the main theorem in this paper. This gives a more direct proof that Levenshtein's permutation code is a perfect 1-deletion code. The original proof in~\cite{Levenshtein92} involves the binary VT codes, as Levenstein's construction depends on the binary VT codes. In contrast, the new approach depends on the properties of the representation function $R(\rho)$. Another advantage of the new proof is that it gives a decoding algorithm immediately.

\begin{theorem} The permutation code $\mathcal{C}_{T,n}$ satisfies the followings properties.
	
	\begin{enumerate}
		\item If $T_1 \not\equiv T_2 \bmod n$, the two codebooks $\mathcal{C}_{T_1,n}$ and $\mathcal{C}_{T_2,n}$ are disjoint.	\item 	For fixed $T$, the permutation code $\mathcal{C}_{T,n}$ contains $(n-1)!$ codewords, and is able to correct a single deletion.
	\end{enumerate}
\end{theorem}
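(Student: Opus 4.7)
Property~(1) is immediate. The sets $\mathcal{W}_{T_1, n}$ and $\mathcal{W}_{T_2, n}$ are disjoint subsets of $V_n$ whenever $T_1 \not\equiv T_2 \pmod n$, and since the map $\alpha \mapsto (R^{-1}(\alpha))^{-1}$ is a composition of bijections, the codebooks $\mathcal{C}_{T_1, n}$ and $\mathcal{C}_{T_2, n}$ are disjoint as well. For the cardinality part of~(2), I would note that $|V_n| = n!$, and that the first $n-1$ components $a_0, a_1, \ldots, a_{n-2}$ of a vector in $V_n$ may be chosen freely in $(n-1)!$ ways. For any such choice, the last component $a_{n-1}$ varies over $\{1, 2, \ldots, n\}$, a complete residue system modulo~$n$, so exactly one value of $a_{n-1}$ satisfies $a_0 + a_1 + \cdots + a_{n-1} \equiv T \pmod n$. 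Hence $|\mathcal{W}_{T, n}| = (n-1)!$, and bijectivity of $R$ together with the inverse map gives $|\mathcal{C}_{T, n}| = (n-1)!$.

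The substantive part of~(2) is that $\mathcal{C}_{T, n}$ corrects a single deletion. Since $|\mathcal{C}_{T, n}| \cdot n = n! = |S_n|$, it is enough to prove that the map
$$
\Phi \colon \mathcal{C}_{T, n} \times \{0, 1, \ldots, n-1\} \longrightarrow S_n, \qquad (\pi, i) \longmapsto \pi \circ \kappa_{i, 1},
$$
is surjective; any surjection between equinumerous finite sets is a bijection, so the deletion balls $B(\pi, 1)$ for $\pi \in \mathcal{C}_{T, n}$ would then automatically partition $S_n$. Fix an arbitrary $\tau \in S_n$. Its preimages under $\Phi$ correspond to those indices $i \in \{0, 1, \ldots, n-1\}$ for which the candidate codeword $\pi_i \triangleq \tau \circ \kappa_{i, 1}^{-1}$ satisfies $R(\pi_i^{-1}) \in \mathcal{W}_{T, n}$. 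My plan is therefore to prove the following key claim: as $i$ ranges over $\{0, 1, \ldots, n-1\}$, the residues
$$
S_i(\tau) \;\triangleq\; \sum_{j=0}^{n-1} a_j^{(i)} \bmod n, \quad \text{where } R(\pi_i^{-1}) = [a_0^{(i)}, a_1^{(i)}, \ldots, a_{n-1}^{(i)}],
$$
form a permutation of $\{0, 1, \ldots, n-1\}$. Granting this, exactly one index $i$ realizes $S_i(\tau) \equiv T \pmod n$, which is simultaneously the decoding rule and the surjectivity of~$\Phi$.

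To establish the key claim, I would use Theorem~\ref{thm:R_using_inverse} to express each component $a_j^{(i)}$ purely in terms of the vector representation of $\pi_i$, and exploit the fact that $\pi_i$ is obtained from $\tau$ by removing the last symbol $\tau_{n-1}$ and reinserting it at position~$i$, so that $\pi_{i+1} = \pi_i \circ \sigma_{i, i+1}$ differs from $\pi_i$ by a single adjacent swap. The main obstacle will be tracking how the cyclic-distance tests in~\eqref{eq:alpha} evolve under this adjacent swap for every $j$ simultaneously, and then summing these local changes modulo~$n$ to show that consecutive residues $S_i(\tau)$ and $S_{i+1}(\tau)$ always differ by an increment that is nonzero modulo~$n$ and whose total over all $i$ is $0 \bmod n$. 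I expect this careful combinatorial bookkeeping to be the substance of the technical result proved in Section~\ref{sec:proof_of_theorem_5}.
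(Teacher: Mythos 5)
Your handling of property (1) and of the cardinality count is exactly the paper's argument, and your reduction of the deletion-correction claim to the statement that the $n$ reinsertion candidates $\tau\circ\kappa_{i,1}^{-1}$ have pairwise distinct parities modulo $n$ is also the paper's strategy (Theorem~\ref{thm:distinct_parity}); you package it as a surjectivity-plus-counting argument for the map $(\pi,i)\mapsto\pi\circ\kappa_{i,1}$ rather than directly as a decoding rule, which is an equivalent and equally clean framing.

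The genuine gap is that the key claim is only asserted, and the plan you sketch for it would not succeed even if carried out. You propose to show that consecutive residues $S_i(\tau)$ and $S_{i+1}(\tau)$ differ by an increment that is nonzero modulo $n$ and that the increments sum to $0\bmod n$; this does not imply that all $n$ residues are distinct modulo $n$ (for $n=4$ the sequence $0,1,0,1$ has nonzero consecutive differences whose total is $0\bmod 4$, yet it takes only two values). What is actually true, and what the paper proves in Theorem~\ref{thm:distint_values}, is the exact formula $\textit{parity}(R(\pi^{(i)}))=\textit{parity}(R(\pi^{(0)}))+i(1-b_i)-\sum_{j=1}^{i}b_j$ with $b_j\in\{0,1\}$ as in \eqref{eq:b}: when $b_i=0$ the new parity exceeds all previous ones by exactly one, and when $b_i=1$ it falls below all previous ones by exactly one, so the $n$ parities form a block of $n$ consecutive integers and are in particular distinct modulo $n$. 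Obtaining this formula is the real work --- it rests on the pivot identities and recursion of Lemmas~\ref{lemma:easy}, \ref{lemma:sum} and \ref{lemma:key}, i.e.\ on tracking the three components of $R(\pi^{(i)})$ that change at each step, none of which appears in your sketch. Until that is supplied, the deletion-correction part of property (2) is unproved.
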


It is easy to see the first property, because, if $T_1$ and $T_2$ are not congruent modulo $n$, the two sets $\mathcal{W}_{T_1,n}$ and $\mathcal{W}_{T_2,n}$ are mutually disjoint. Since the mapping $R$ is a bijection, the two preimages $R^{-1}(\mathcal{W}_{T_1,n})$ and 
$R^{-1}(\mathcal{W}_{T_2,n})$ are mutually disjoint. The disjointness is maintained after taking inverse. 

To show $|\mathcal{C}_{T,n}| = (n-1)!$, it is equivalent to show that $\mathcal{W}_{T,n}$ has cardinality $(n-1)!$. In the followings, we use the notation $\langle x \rangle_n$ to represent the remainder of $x$ after division by $n$, when $x$ is not divisible by $n$, and we define $\langle x \rangle_n\triangleq n$ if $x$ is divisible by $n$.
For fixed $T$, we can exhaust all vectors in $\mathcal{W}_{T,n}$ by, for $j=1,2,\ldots,n-2$, setting $a_j$ to be any integer between 1 and $j+1$, and noting that the vector
$$
[1,a_1,a_2,\ldots, a_{n-2}, \langle T-1-a_1-a_2-\ldots- a_{n-2} \rangle_n]
$$
is in $\mathcal{W}_{T,n}$. This proves $\mathcal{W}_{T,n}$ contains $2\times 3 \times \cdots \times (n-1) = (n-1)!$ distinct vectors.

From the remark at the end of Section~\ref{sec:transposition}, we see that we can represent a deletion operation by a block transposition in the form $\kappa_{d,1}$, where $d \in \{0,1,\ldots, n-1\}$ is the index of the position that is deleted. Suppose $\rho^{-1}$ is the transmitted codeword, i.e., $R(\rho)$ is in $\mathcal{W}_{T,n}$. Furthermore, suppose the symbol at position $d$ is deleted. the last entry in the vector representation of $\rho^{-1} \kappa_{d,1}$ is the deleted symbol, while the first $n-1$ entries form the observed sequence after deletion. Hence, the task of decoding is to recover the original codeword $\rho^{-1}$ from $\rho^{-1} \kappa_{d,1}$. The value of $d$ is unknown to the decoder.

The first step in the decoding process is to take the inverse of $\rho^{-1} \kappa_{d,1}$, 
$$
(\rho^{-1} \kappa_{d,1})^{-1}  = \kappa_{d,1}^{-1} \rho.
$$
The vector representation of  $\kappa_{d,1}^{-1}$ is
$$
(0,1,\ldots, d-1, n-1, d, d+1,\ldots, n-2)
$$
The effect of multiplying $\kappa_{d,1}^{-1}$ from the left of $\rho$ is to replace the symbol $d$ in the vector representation of $\rho$ by $n-1$, and replace all symbols $s$ whose value is larger than $d$ by $s-1$. We illustrate the idea in Fig.~\ref{fig:diagram}. The symbol `4' at position 2 in $\rho^{-1}$ is deleted and moved to the end. If we take the inverse of the permutation, this action amounts to pushing the value at position 4 in $\rho$ to the maximum.

\begin{figure}
	\begin{center}
		\includegraphics[width=9cm]{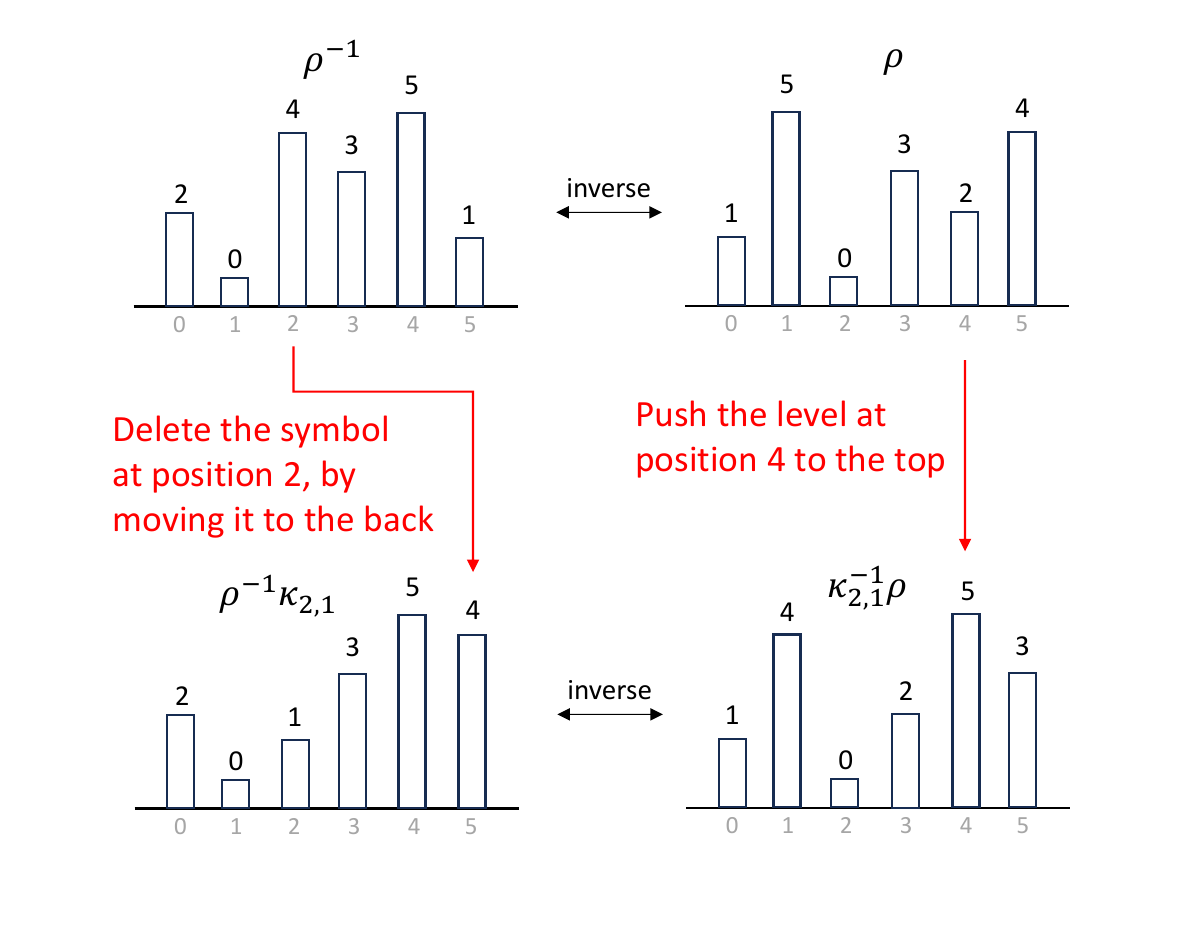}
	\end{center}
	\caption{The effect of a symbol deletion.}
	\label{fig:diagram}
\end{figure}

The second observation is that the block translocation $\kappa_{d,1}^{-1}$ can be decompose as the product of adjacent transpositions
\begin{equation}
\kappa_{d,1}^{-1} = \sigma_{d,d+1} \circ \sigma_{d+1,d+2} \circ \cdots\circ\sigma_{n-3,n-2} \circ \sigma_{n-2,n-1}
\label{eq:transposition_product} 
\end{equation}
In other words, the action of $\kappa_{d,1}^{-1}$ (as a multiplication from the left) on a permutation $\rho$ is to first exchange the symbols $n-1$ and $n-2$, then exchange symbols $n-3$ and $n-2$, and repeat this process until we exchange symbols $d$ and $d+1$ at the end (See Fig.~\ref{fig:bubble}).

\begin{figure}
\begin{center}
	\includegraphics[width=6cm]{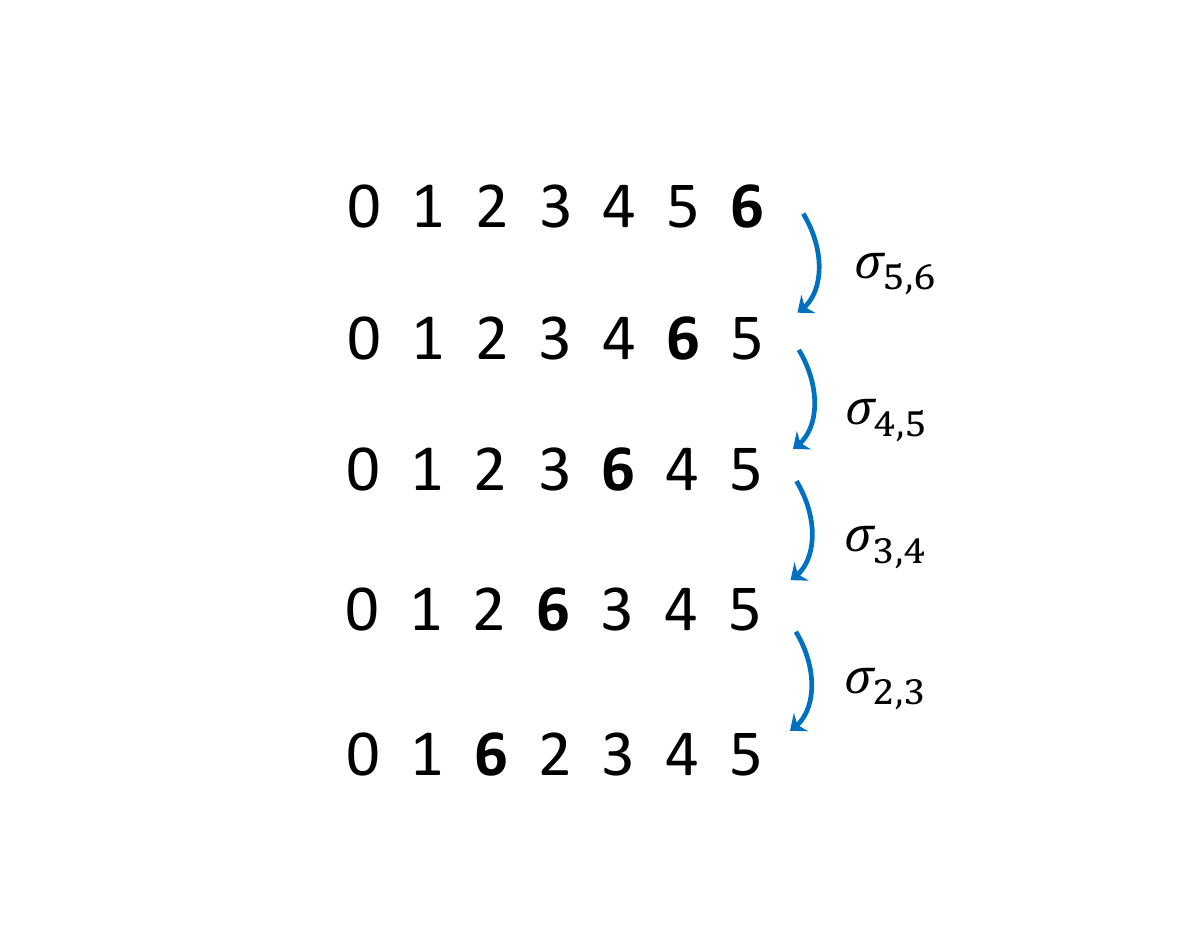}
\end{center}
\caption{Illustration of Equation \eqref{eq:transposition_product}}
\label{fig:bubble}
\end{figure}

For notation convenience, we define the {\em parity} of a
vector $\alpha = [a_0,a_1,\ldots, a_{n-1}]$ in $V_n$ by
$$
\textit{parity}(\alpha) \triangleq a_0+a_1+a_2+\cdots + a_{n-1},
$$
which is the sum of the components in the vector $\alpha$.
We also define the parity of a permutation $\pi$ by $\textit{parity}(R(\pi))$.
Hence, the permutation code $\mathcal{C}_{T,n}$ is the set of permutations whose parity mod $n$ is a constant $T$.

The general idea of the decoding algorithm is as follows. If there is no deletion, we can recover the data from computing the inverse of the permutation and then apply the function $R$. Otherwise, if there is a deletion, we can determine the permutation $\kappa_{d,1}^{-1} \rho$ (as shown  on the lower right corner of Fig.~\ref{fig:diagram}). The location of the largest value corresponds to the value of the symbol that is deleted. We graduate decrease the level at this position to the ground level. We will obtain $n$ permutations, and one of them has the correct parity. This procedure will give the correct answer if the parity of these $n$ permutations modulo $n$ are mutually distinct.

\begin{theorem}
Let $\rho$ be a given permutation. 
Define $n$ permutations $\pi^{(0)}$, $\pi^{(1)},\ldots,\pi^{(n-1)}$  by
$$
\pi^{(j)}  \triangleq  \kappa_{n-j-1,1} \circ \rho,
$$
for $j=0,1,\ldots, n-1$.
Then the parity of $\pi^{(0)}$, $\pi^{(1)},\ldots, \pi^{(n-1)}$ are distinct. Moreover, the set
$$
\{ \textit{parity}(\pi^{(j)}) : j=0,1,\ldots, n-1\}
$$
is a set consisting of $n$ consecutive integers.
\label{thm:distinct_parity}
\end{theorem}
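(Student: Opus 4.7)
The plan is to reduce the problem to an analysis of successive adjacent-value transpositions, then compute the parity changes through the signature of the inverse permutation.

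\emph{Step 1 (Reduction).} A direct element-by-element check gives the commutation identity $\kappa_{n-j-2,1}\,\kappa_{n-j-1,1}^{-1} = \sigma_{n-j-2,n-j-1}$, where both sides fix every element outside $\{n-j-2,n-j-1\}$ and swap those two. Hence $\pi^{(j+1)} = \sigma_{n-j-2,n-j-1}\,\pi^{(j)}$, and the family $\{\pi^{(j)}\}_{j=0}^{n-1}$ is obtained from $\pi^{(0)} = \rho$ by successive left-multiplications by the adjacent-value transpositions $\sigma_{n-2,n-1}, \sigma_{n-3,n-2}, \ldots, \sigma_{0,1}$. Equivalently, writing $\rho^{-1} = (c_0,\ldots,c_{n-1})$, the inverse $(\pi^{(j)})^{-1} = \rho^{-1}\,\kappa_{n-j-1,1}^{-1}$ agrees with $\rho^{-1}$ on the first $n-j-1$ coordinates, and the remaining $j+1$ coordinates $(c_{n-j-1},c_{n-j},\ldots,c_{n-1})$ are replaced by their cyclic right-shift $(c_{n-1},c_{n-j-1},\ldots,c_{n-2})$.

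\emph{Step 2 (Modular distinctness).} Applying the congruence
\[
\textit{parity}(\pi) + \sum_{k=1}^{n-1} k\,\tilde{b}_k \equiv 0 \pmod n
\]
from the end of Section~\ref{sec:construction}, with $\tilde{b}_k$ the signature of $\pi^{-1}$, the claim reduces to showing that $\sum_k k\,\tilde{b}^{(j)}_k \pmod n$ takes $n$ distinct values. Under the cyclic right-shift of Step~1, only a constant number of signature entries are perturbed: $\tilde{b}^{(j)}_{n-j-1} = [c_{n-1} > c_{n-j-2}]$, $\tilde{b}^{(j)}_{n-j} = [c_{n-j-1} > c_{n-1}]$, and $\tilde{b}^{(j)}_k = \tilde{b}_{k-1}$ for $k > n-j$. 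Telescoping the sum yields a closed-form expression for $\sum_k k\,\tilde{b}^{(j)}_k - \sum_k k\,\tilde{b}_k$ depending on $j$ through indicator variables on the three positions $c_{n-j-2},c_{n-j-1},c_{n-1}$ together with a tail sum of unchanged signature entries. A case analysis on the relative order of these three values shows that this difference attains each residue modulo $n$ exactly once as $j$ ranges over $\{0,1,\ldots,n-1\}$.

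\emph{Step 3 (Consecutiveness).} To upgrade to the statement about $n$ consecutive integers, I would also bound $|\textit{parity}(\pi^{(j)}) - \textit{parity}(\rho)| \le n-1$. Since only the $a_i^{(j)}$ with $i \ge j-1$ can differ from $a_i(\rho)$, and each such $a_i^{(j)}$ lies in $[1,i+1]$, a careful accounting of how the individual $a_i$ differences contribute to the total (using the explicit description of $(\pi^{(j)})^{-1}$ from Step~1) should yield the desired bound. Combined with the mod-$n$ distinctness from Step~2, the $n$ parities are forced to form exactly $n$ consecutive integers.

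The main obstacle is the case analysis in Step~2: although the signature change has a compact form, the indicator flips depend delicately on the structure of $\rho^{-1}$ near its tail, and establishing modular distinctness for arbitrary $\rho$ requires handling several sub-cases on the ordering of $c_{n-j-2}, c_{n-j-1}, c_{n-1}$. A potentially cleaner route would reinterpret $j \mapsto (\pi^{(j)})^{-1}$ as sweeping the entry $c_{n-1}$ cyclically through the $n$ positions of the inverse permutation, whence the $n$ parities would trace out a single orbit in $\mathbb{Z}/n$ by a more structural argument that handles Steps~2 and 3 simultaneously.
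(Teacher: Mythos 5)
Your Step 1 is sound and is essentially the paper's own starting point: the decomposition of $\kappa_{d,1}^{-1}$ into adjacent transpositions in \eqref{eq:transposition_product}, and the description of $(\pi^{(j)})^{-1}$ as $\rho^{-1}$ with its last symbol re-inserted $j$ positions earlier. The rest of the argument, however, has two genuine gaps. In Step 2 you reduce the claim to showing that the checksums $\sum_k k\,\tilde b^{(j)}_k \bmod n$ are pairwise distinct, and then assert that ``a case analysis on the relative order of these three values shows'' this. That case analysis \emph{is} the content of the theorem --- it is essentially Levenshtein's original VT-code argument, the very thing the paper sets out to replace --- and you have not carried it out; you yourself flag it as the main obstacle. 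As written, Step 2 establishes nothing.

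Step 3 is flawed even in outline. Distinctness modulo $n$ together with the two-sided bound $|\textit{parity}(\pi^{(j)})-\textit{parity}(\rho)|\le n-1$ only confines the $n$ values to a window of $2n-1$ integers, which does not force consecutiveness: for $n=3$ the values $8,10,12$ are distinct mod $3$ and all within $2$ of $10$, yet are not consecutive. What you would need is $\max_j \textit{parity}(\pi^{(j)})-\min_j \textit{parity}(\pi^{(j)})\le n-1$, and no argument for that is offered; the per-component bounds $a_i^{(j)}\in[1,i+1]$ do not obviously yield it. Note also that any route through the congruence $\textit{parity}(\pi)\equiv-\sum_k k\tilde b_k \pmod n$ sees the parities only modulo $n$, so consecutiveness must come from a separate exact (integer-valued) computation. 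The paper sidesteps both problems by proving the exact formula $\textit{parity}(R(\pi^{(i)}))=\textit{parity}(R(\rho))+i(1-b_i)-\sum_{j=1}^{i}b_j$ (Theorem~\ref{thm:distint_values}, via the pivot recursion of Lemmas~\ref{lemma:easy}--\ref{lemma:key}): the indices with $b_i=1$ then give the values $p-1,\dots,p-K$ and those with $b_i=0$ give $p+1,\dots,p+n-1-K$, which yields distinctness and consecutiveness simultaneously. To salvage your route you would need to (i) actually perform the signature case analysis and (ii) replace the mod-$n$ bookkeeping with an exact accounting of how the parity changes at each insertion step --- at which point you would in effect be reproving Theorem~\ref{thm:distint_values}.
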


We note that in Theorem~\ref{thm:distinct_parity}, the permutation $\pi^{(0)}$ is the same as $\rho$, because $\kappa_{n-1,1}$ is the identity permutation.

We will describe a procedure that computes the numbers $\textit{parity}(\pi^{(j)})$, for $j=0,1,2,\ldots, n-1$, and then see that these numbers, after some sorting, form a sequence of consecutive integers.

Using the fact that the block transposition $\kappa_{d,1}$ can be decomposed into product of adjacent transpositions as \eqref{eq:transposition_product}, we can obtain $\pi^{(j+1)}$ from $\pi^{(j)}$ by multiplying the adjacent transposition from the left. We can alternately define the permutations $\pi^{(j)}$'s recursively by
$$
\begin{cases}
	\pi^{(0)}  \triangleq \rho  & \\	
	\pi^{(j)}  \triangleq  \sigma_{n-j-1,n-j} \circ \pi^{(j)} &
	\text{for } j=1,2,\ldots, n-1.
\end{cases}
$$
We will view the sequence of permutations
$$
\pi^{(0)}, \ \pi^{(1)}, \pi^{(2)}, \ldots , \pi^{(n-1)}
$$
as the states of a process obtained by decreasing the largest value in the vector representation of $\rho$, and consider change of the corresponding $\textit{parity}(R(\pi^{(j)}))$ as $j$ increases from 0 to $n-1$.


\smallskip

It will be more convenient if we extend the permutation as a periodic sequence. An example of $\rho = (0,4,1,3,2)$ is shown in Fig.~\ref{fig:sign}. The representation of $\rho$ using the mapping $R$ is $[a_0,a_1,\ldots, a_4] = (1,1,2,3,5)$, is shown in red in Fig.~\ref{fig:sign}. We can give an alternate way to compute the representation $R(\rho)$. We scan through the diagram that displays $\rho$ as a periodic sequence from left to right. For each pair of consecutive integers $(i,i-1)$, for $i\geq 1$, we count the number of symbols that are larger than or equal to $i-1$ as we go from symbol $i$ to symbol $i-1$ in the diagram (the starting symbol $i$ is not counted). This gives the value of $a_{n-1-i}$. The value of $a_{n-1}$ is the number of locations from the symbol $0$ to the beginning of the next period.

\begin{figure}
	\begin{center}
		\includegraphics[width=10cm]{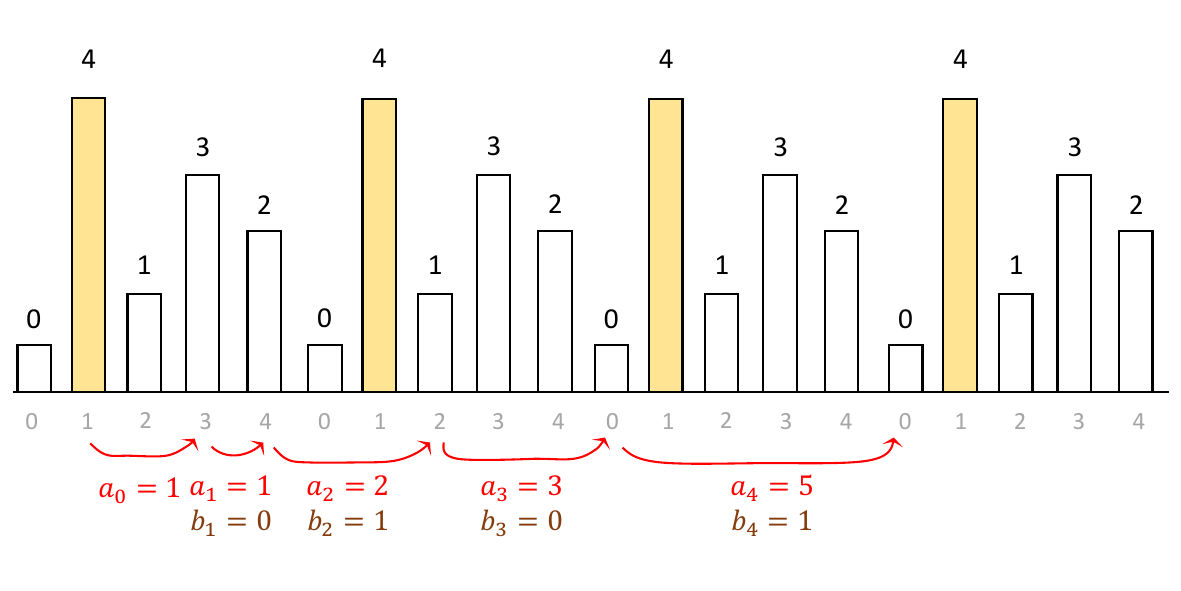}
	\end{center}
	\caption{Periodic presentation of the permutation $\rho = (0,4,1,3,2)$. The locations with the largest value is highlighted.}
	\label{fig:sign}
\end{figure}


We define an important sequence of numbers $b_1$, $b_2,\ldots, b_{n-1}$ for a given permutation $\rho$. For $j=1,2,\ldots, n-1$, we let $b_j$ to be equal to 0 if, when we move from symbol $n-j$ to the nearest symbol $n-j-1$ to the right, we do not pass through the largest symbol $n-1$. Otherwise, we set $b_j =1$.

As the inverse permutation $\rho^{-1}$ encodes the positions of the symbol in the vector representation of $\rho$, we can  give the definition in terms of $\rho^{-1}$. We define, for $i=0,1,2,\ldots, n-2$,
\begin{equation}
	b_{n-1-i} \triangleq \begin{cases}
		0 & \langle [\rho^{-1}(i)-\rho^{-1}(n-1) \bmod n]  + [ \rho^{-1}(i-1)- \rho^{-1}(i) \bmod n] <n , \\ 
		1 & \text{otherwise},
	\end{cases}
	\label{eq:b}
\end{equation}
which is equivalent to
\begin{equation}
b_{n-1-i} \triangleq \begin{cases}
0 & \langle [\rho^{-1}(n-1)-\rho^{-1}(i) \bmod n]  >  [ \rho^{-1}(i-1)- \rho^{-1}(i) \bmod n]  .\\ 
1 & \text{otherwise}.
\end{cases}
\label{eq:c}
\end{equation}
By convention, we define $\rho^{-1}(-1)$ as $n$.



As an example, the permutation $(0,4,1,3,2)$ in Fig.~\ref{fig:sign}, the sequence of integers $b_1, \ldots, b_4$ is $b_1=0$, $b_2=1$, $b_3=0$, $b_4=1$.

The next theorem is the key to the decoding algorithm.

\begin{theorem} \label{thm:distint_values}
For $i=1,2,\ldots, n-1$, we have
\begin{equation}
\textit{parity}(R(\pi^{(i)})) =
 \textit{parity}(R(\pi^{(0)})) + i(1-b_i)- \sum_{j=1}^i b_j  .
 \label{eq:distinct_values}
\end{equation}
\end{theorem}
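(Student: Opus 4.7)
My plan is to compute the components of $R(\pi^{(i)}) = [a^{(i)}_0,\ldots,a^{(i)}_{n-1}]$ one at a time using the periodic-diagram interpretation of $R$ from Section~\ref{sec:phi}, and compare to $R(\rho) = [a_0,\ldots,a_{n-1}]$. First I would spell out $\pi^{(i)}$ explicitly: unrolling the recursion $\pi^{(i)} = \sigma_{n-i-1,n-i}\circ\pi^{(i-1)}$ shows that $\pi^{(i)}$ agrees with $\rho$ at every position outside $\{\rho^{-1}(n-1),\ldots,\rho^{-1}(n-i-1)\}$; on this set, the value at $p := \rho^{-1}(n-1)$ is lowered from $n-1$ to $n-i-1$, while at each $\rho^{-1}(v)$ with $v\in\{n-i-1,\ldots,n-2\}$ the value is raised by one.

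With this description in hand I would split the $n$ indices $j$ into three regimes. For $j \geq i+1$ (together with $j=n-1$ when $i\leq n-2$), both arc endpoints and every relocated interior value stay on the ``counted'' side of the threshold, so $a^{(i)}_j = a_j$. For $0 \leq j \leq i-2$, the arcs for $a^{(i)}_j$ and $a_{j+1}$ share the endpoints $\rho^{-1}(n-2-j)$ and $\rho^{-1}(n-3-j)$ but their inclusion thresholds differ by one; a case check on the three types of positions (unchanged, equal to $p$, or shifted) shows that only the value $n-1$ at $p$ can flip between the two counts, and it does so iff $p$ lies in the open arc $(\rho^{-1}(n-2-j),\rho^{-1}(n-3-j))$, which by \eqref{eq:b} is precisely $b_{j+1}$. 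Hence $a^{(i)}_j = a_{j+1} - b_{j+1}$, and telescoping gives
\[
\sum_{j=0}^{i-2}\bigl(a^{(i)}_j - a_j\bigr) \;=\; a_{i-1} - 1 - \sum_{j=1}^{i-1} b_j.
\]

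The main technical step is the remaining case $j \in \{i-1,i\}$, where the arc endpoints themselves move between $\rho$ and $\pi^{(i)}$. The two open arcs controlling $a^{(i)}_{i-1}$ and $a^{(i)}_i$ are $A := (\rho^{-1}(n-i-1),p)$ and $B := (p,\rho^{-1}(n-i-2))$ (going right), and the cyclic ordering of the three points $\rho^{-1}(n-i-1), p, \rho^{-1}(n-i-2)$ is governed by $b_i$. When $b_i = 1$ the arcs $A$ and $B$ are disjoint and $A\cup\{p\}\cup B$ is the single open arc $(\rho^{-1}(n-i-1),\rho^{-1}(n-i-2))$ that passes through $p$; when $b_i = 0$ the union $A\cup B$ covers the whole cycle except $\{p\}$, while $A\cap B$ equals the short arc $(\rho^{-1}(n-i-1),\rho^{-1}(n-i-2))$ that avoids $p$. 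Running inclusion--exclusion in each case, together with the count
\[
|\{v\in\{n-i,\ldots,n-2\}:\rho^{-1}(v)\in(\rho^{-1}(n-i-1),\rho^{-1}(n-i-2))\}| \;=\; a_i - 1 - b_i
\]
read off from the defining formula for $a_i$, both configurations collapse to the uniform identity
\[
a^{(i)}_{i-1} + a^{(i)}_i \;=\; (i+1)(1-b_i) + a_i.
\]
I expect the bookkeeping across the two cyclic configurations to be the most delicate part of the argument, since the geometry of the arcs genuinely differs.

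Finally, summing the two contributions yields
\[
\textit{parity}(R(\pi^{(i)})) - \textit{parity}(R(\pi^{(0)})) \;=\; \bigl(a_{i-1} - 1 - \textstyle\sum_{j=1}^{i-1} b_j\bigr) + \bigl((i+1)(1-b_i) - a_{i-1}\bigr) \;=\; i(1-b_i) - \sum_{j=1}^{i}b_j,
\]
which is the claimed identity. The edge cases $i=1$ (the telescoping sum of Regime II is empty) and $i=n-1$ (the value $0$ itself moves, so the index $j=n-1$ is handled by the third regime using the convention $\rho^{-1}(-1)=n$) are absorbed into this framework without modification.
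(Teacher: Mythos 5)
Your proof is correct, and it reaches \eqref{eq:distinct_values} by a genuinely different route from the paper. The paper argues by induction on $i$: it compares $R(\pi^{(i)})$ with $R(\pi^{(i+1)})$, which differ only in positions $i-1$, $i$, $i+1$, and controls those three entries through a chain of lemmas --- the stability relations $a^{(0)}_i=\cdots=a^{(i-1)}_i$ and $a^{(i+2)}_i=\cdots=a^{(n)}_i$, the complementary-pair identity $a^{(i)}_i+a^{(i+1)}_i=i+2$, and the pivot recursion $a^{(i+1)}_{i+1}=a^{(i)}_i+a_{i+1}-(i+2)b_{i+1}$ --- together with an auxiliary length-$(n-1)$ representation $\alpha^{(n)}$. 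You instead compare $R(\pi^{(i)})$ directly with $R(\rho)$: components with $j\ge i+1$ are untouched, $a^{(i)}_j=a_{j+1}-b_{j+1}$ for $j\le i-2$ (which telescopes, and is precisely the composite of the paper's Lemma~\ref{lemma:easy}(ii) and (iii)), and the two genuinely moving components satisfy the single combined identity $a^{(i)}_{i-1}+a^{(i)}_i=(i+1)(1-b_i)+a_i$. I verified that identity in both cyclic configurations: for $b_i=1$ the two arcs concatenate to the arc defining $a_i$ and the sum is $a_i$; for $b_i=0$ the arcs cover the whole cycle with overlap equal to the short arc, and inclusion--exclusion with your count $a_i-1-b_i$ gives $i+1+a_i$; it also agrees with what one obtains by combining the paper's Lemmas~\ref{lemma:sum} and~\ref{lemma:key}, so both arguments ultimately rest on the same arc-counting characterization of $R$ from Theorem~\ref{thm:R_using_inverse}. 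What your version buys is the elimination of the induction and of the auxiliary permutation, giving the parity difference in one pass; what it costs is that the pair $(i-1,i)$ must be handled by a two-case cyclic-arc analysis, which is the step you rightly flag as delicate and which should be written out in full in a final version (including the boundary convention $\rho^{-1}(-1)=n$ needed when $i=n-1$, which you already note).
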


We note that the sum $ \sum_{j=1}^i b_j$ represents the number of indices $j\in \{1,2,\ldots, i\}$ such that $b_j=1$.  If $b_i=1$, the calculations in \eqref{eq:distinct_values} amounts to subtracting the sum  $ \sum_{j=1}^i b_j$ from  $\textit{parity}(R(\pi^{(0)}))$.
Likewise, when $b_i=0$ the value of $i-\sum_{j=1}^ib_j$ is the number of indices $j \in \{1,2,\ldots, i\}$ such that $b_j=0$. In this case, we add the number of $b_j$'s, for $j=1,2,\ldots, i$, that are equal to 0, to $\textit{parity}(R(\pi^{(0)}))$.

The proof of Theorem~\ref{thm:distint_values} will be given in the next section.

%

Given a sequence of symbols of length $n-1$, obtained from deleting a symbol from a codeword of $\mathcal{C}_{T,n}$. Since there is only one deletion, we can uniquely determine the deleted symbol. We insert the deleted symbol back to the codeword and consider all $d$ possible ways to put it back. For each case, we compute the inverse permutation and the associated representations using the mapping $R$. The resulting representations are precisely $\pi^{(1)},\ldots \pi^{(n-1)}$, for $j=1,2,\ldots, n-1$, as defined in Theorem~\ref{thm:distinct_parity}. Since exactly one of the permutation $\pi^{(j)}$'s has parity congruent to $T$ mod~$n$, we return the inverse of this permutation as the decoded codeword. 

By the construction of $\mathcal{C}_{T,n}$ and the property proved in Theorem~\ref{thm:distinct_parity}, the decoding procedure described in the previous paragraph will always give the correct answer. Because a decoding algorithm that correct a single deletion exists for permutation code $\mathcal{C}_{T,n}$, this completes the proof of Theorem~\ref{thm:distinct_parity}.

\section{Proof of Theorem~\ref{thm:distint_values}}
\label{sec:proof_of_theorem_5}
	
Let $\rho = (\rho_0,\ldots, \rho_{n-1})$ be a fixed permutation in $S_n$. We will write the representation $R(\rho)$ as
$$
R(\rho) = \alpha = [a_0, a_1,\ldots, a_{n-1}].$$
Let $b_1,\ldots, b_{n-1}$ be the zero-one values computed by~\eqref{eq:b}.

We want to calculate the parity of $\pi^{(i)}  = \kappa_{n-i-1,1} \rho$, for $i=0, 1,2,\ldots, n-1$. 

In our application of the construction of deletion-correcting code, the inverse $\rho^{-1}$ is the permutation whose last component is the deleted symbol, and the first $n-1$ components constitute the deleted codeword. Suppose the $\rho^{-1}$ has vector representation
$$
(c_0,c_1,\ldots, c_{n-2}, \Delta).
$$
Then $\pi^{(i)}$ is the inverse of
$$
(c_0, c_1, \ldots, c_{n-i-2}, \Delta , c_{n-i-1},\ldots, c_{n-2}),
$$
which is the vector obtained by inserting the deleted symbol $\Delta$ between symbol $c_{n-j-2}$ and $c_{n-j-1}$.
We write the representation $R(\pi^{(i)})$ as
$$
R(\pi^{(i)}) = \alpha^{(i)}= [a^{(i)}_0, a^{(i)}_1, \ldots, a^{(i)}_{n-1}],
$$
for $i=0,1,\ldots, n-1$.

In addition, we define 
$$\alpha^{(n)}=[a^{(n)}_0,a^{(n)}_1,\ldots, a^{(n)}_{n-2}]$$
to be the representation of $(c_0,c_1,\ldots, c_{n-2})^{-1}$ as a permutation of length $n-1$. This will play an auxiliary role in the derivation.

\begin{example}
	Suppose $\rho=(0,4,1,3,2)$ as in Fig.~\ref{fig:sign}. The vectors $\alpha^{(0)},\ldots, \alpha^{(5)}$ are calculated in Table~\ref{table:vector_a}.
\end{example}

\begin{table}
	\caption{The vectors $\alpha^{(i)}$ for $\rho=(0,4,1,3,2)$. Column sums shown in the last row are distinct. The last column is the auxiliary permutation of length~4}
	\label{table:vector_a}
	\begin{tabular}{|c||c|c|c|c|c||c|}  
		\hline 
		&  $\rho^{-1} =$ & $(\pi^{(1)})^{-1}=$  & $(\pi^{(2)})^{-1}=$ & $(\pi^{(3)})^{-1}=$ & $(\pi^{(4)})^{-1}=$ &  \\
		& $(0,2,4,3,1)$  & $(0,2,4,1,3)$ & $(0,2,1,4,3)$ & $(0,1,2,4,3)$ & $(1,0,2,4,3)$ &  $(0,2,4,3)$\\ \hline
		$j$ & $a^{(0)}_j$ & $a^{(1)}_j$ & $a^{(2)}_j$ & $a^{(3)}_j$ &   $a^{(4)}_j$ & $ a_j^{(5)}$\\ \hline \hline
		0 & 1 & 1 & 1 & 1 & 1 & 1\\
		1 & 1 & 2 & 1 & 1 & 1 & 1 \\
		2 & 2 & 2 & 1 & 3 & 3 & 3\\
		3 & 3 & 3 & 3 & 4 & 1 & 4\\
		4 & 5 & 5 & 5 & 5 & 4 &\\ \hline 
parity & 12 & 13 & 11 & 14 & 10 & \\ \hline
	\end{tabular}
\end{table}

\begin{lemma} \label{lemma:easy}
 With the notation in the previous paragraph, we have the following relations among the variables. 

(i) For $i=2,3,\ldots n-1$,
\begin{equation}
a^{(0)}_i = a^{(1)}_i = \cdots = a^{(i-1)}_i .
\label{eq:left}
\end{equation}

(ii) For $i=0,1,\ldots n-2$,
\begin{equation}
	a^{(i+2)}_i = a^{(i+3)}_i = \cdots = a^{(n-1)}_i = a^{(n)}_i.
	\label{eq:right}
\end{equation}

(iii) For $i=1,2,\ldots, n-1$, $ a_i = a^{(n)}_{i-1} +  b_i$.

(iv) For $i=1,2,\ldots, n-2$,
$$
a^{(i+1)}_{i-1} = a^{(i-1)}_i -  b_i.
$$
\end{lemma}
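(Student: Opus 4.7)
The plan is to reduce all four parts to Theorem~\ref{thm:R_using_inverse}, which expresses each component of $R(\pi)$ as a count of indices satisfying a cyclic-distance inequality on the inverse permutation. For each $k\in\{0,1,\ldots,n-1\}$, write $(\pi^{(k)})^{-1}=(d^{(k)}_0,\ldots,d^{(k)}_{n-1})$; the explicit description of $\pi^{(k)}$ as insertion of $\Delta$ at position $n-k-1$ of $\rho^{-1}$ gives $d^{(k)}_j=c_j$ for $j<n-k-1$, $d^{(k)}_{n-k-1}=\Delta$, and $d^{(k)}_j=c_{j-1}$ for $j>n-k-1$.

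For part~(i), fix $i\ge 2$ and $k\in\{0,\ldots,i-1\}$. The inequalities $n-2-i<n-1-i<n-k-1$ show that the ``anchor'' values $d^{(k)}_{n-2-i}=c_{n-2-i}$ and $d^{(k)}_{n-1-i}=c_{n-1-i}$ are independent of $k$. A short case check (splitting into $k'<k$, $k'=k$, $k'>k$) shows that as $k'$ ranges over $\{0,\ldots,i\}$, the entries $d^{(k)}_{n-1-k'}$ traverse, in some $k$-dependent order, exactly the fixed set $\{c_{n-1-i},c_{n-i},\ldots,c_{n-1}\}$. Since the count in Theorem~\ref{thm:R_using_inverse} tallies how many members of this set satisfy an inequality that does not depend on their ordering, $a^{(k)}_i$ has the same value for every such $k$.

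For part~(ii), fix $i\le n-2$ and $k\ge i+2$, so that every index in $\{n-2-i,n-1-i,\ldots,n-1\}$ is strictly greater than $n-k-1$. Substituting $d^{(k)}_j=c_{j-1}$ in Theorem~\ref{thm:R_using_inverse} yields
\[
a^{(k)}_i=\#\bigl\{k'\in\{0,\ldots,i\}:[c_{n-2-k'}-c_{n-2-i}\bmod n]\le[c_{n-3-i}-c_{n-2-i}\bmod n]\bigr\},
\]
which does not depend on $k$. For the equality with $a^{(n)}_i$, apply Theorem~\ref{thm:R_using_inverse} (with $n$ replaced by $n-1$) to the length-$(n-1)$ permutation with inverse $(\tilde c_0,\ldots,\tilde c_{n-2})$: the same expression appears with $\tilde c$ in place of $c$ and modulus $n-1$. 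None of the positions $c_{n-3-i},c_{n-2-i},c_{n-1-i},\ldots,c_{n-2}$ equals $\Delta=c_{n-1}$, so deleting $\Delta$ from the cycle $\mathbb{Z}/n\mathbb{Z}$ and relabeling preserves the cyclic order of the remaining $n-1$ points; each ``lies on the clockwise open arc from $c_{n-2-i}$ to $c_{n-3-i}$'' relation therefore transfers verbatim to the reduced cycle, giving the two counts the same value.

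Part~(iii) is handled most transparently by the geometric scan rule for $R$ illustrated in Fig.~\ref{fig:sign}: $a_j$ is the number of symbols with value $\ge n-2-j$ (excluding the starting symbol) encountered on the rightward cyclic path in $\rho$ from symbol $n-1-j$ to the next occurrence of $n-2-j$. The length-$(n-1)$ permutation $\sigma$ represented by $\alpha^{(n)}$ is exactly $\rho$ with symbol $n-1$ deleted and the remaining positions relabeled, so the analogous scan in $\sigma$ for index $j-1$ traces the same sequence of symbols except that $n-1$, if it occurs on the path in $\rho$, is skipped. Hence $a_j-a^{(n)}_{j-1}$ equals $1$ when $n-1$ lies on that rightward path and $0$ otherwise, and by the definition~\eqref{eq:c} this indicator is exactly $b_j$. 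Part~(iv) then follows by chaining: (i) gives $a^{(i-1)}_i=a^{(0)}_i=a_i$ (trivial when $i=1$), (ii) applied at index $i-1$ gives $a^{(i+1)}_{i-1}=a^{(n)}_{i-1}$, and combining with (iii) yields $a^{(i+1)}_{i-1}=a_i-b_i=a^{(i-1)}_i-b_i$. The main obstacle I foresee is the modulus change in (ii): one must verify carefully that removing a single point from $\mathbb{Z}/n\mathbb{Z}$ preserves every ``lies on the open arc'' relation among the remaining points and translate this faithfully into the $[\cdot\bmod n]$ versus $[\cdot\bmod(n-1)]$ inequalities supplied by Theorem~\ref{thm:R_using_inverse}; the rest is disciplined index tracking once the formula for $(\pi^{(k)})^{-1}$ is in hand.
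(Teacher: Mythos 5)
Your proposal is correct and follows essentially the same route as the paper's proof: parts (i) and (ii) by observing that the anchor entries and the counted set in Theorem~\ref{thm:R_using_inverse} are unchanged across the relevant insertions of $\Delta$, part (iii) by the scan/path interpretation with the indicator $b_i$ recording whether symbol $n-1$ lies on the path, and part (iv) by chaining the earlier parts. You are in fact somewhat more explicit than the paper on the modulus change from $n$ to $n-1$ in part (ii) and on the ``book-keeping'' in part (iv), which the paper leaves implicit.
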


\begin{proof}
The proof relies on the calculation of of $R(\rho)$ in Theorem~\ref{thm:R_using_inverse}. Suppose the inverse of $\pi^{(i)}$ has vector representation
$$
(c^{(i)}_0, c^{(i)}_1,\ldots, c^{(i)}_{n-1}). 
$$
The value of $a_j^{(i)}$ is the same as the number of indices $k$ in $\{0,1,\ldots, j\}$ that satisfy
\begin{equation}
[c^{(i)}_{n-1-k} - c^{(i)}_{n-1-j}\bmod n] \leq [c^{(i)}_{n-2-j} - c^{(i)}_{n-1-j} \bmod n].
\label{eq:proof1}
\end{equation}

(i). Let $i=2,3,\ldots, n-1$ be a fixed index. The permutations $\pi^{(0)}$, $\pi^{(1)}, \ldots, \pi^{(i-1)}$ have inverses
\begin{align*}
&(c_0, \cdots, c_{n-1-i}, c_{n-i}, \ldots, c_{n-2}, \Delta ) \\
&(c_0, \cdots, c_{n-1-i}, c_{n-i}, \ldots, \Delta, c_{n-2} ) \\
\vdots \\
&(c_0, \cdots, c_{n-1-i}, \Delta, c_{n-i}, \ldots, c_{n-2} ) .
\end{align*}
The value of $ [c^{(i)}_{n-2-j} - c^{(i)}_{n-1-j} \bmod n]$ are the same. Also, in each of the above sequence of numbers, the numbers that are to the right of $c_{n-1-i}$ are the same. There for the number of indices in $\{0,1,\ldots, i+1\}$ that satisfy the condition in \eqref{eq:proof1} are the same.

(ii) The argument is similar to the proof of part (i). Let $i$ be a fixed index between 0 and $n-2$. The inverses of the permutations $\pi^{(i+2)}, \pi^{(i+3)}, \ldots,\pi^{(n-1)}$ are
\begin{align*}
(\Delta, c_0, \ldots, c_{n-i-1}, c_{n-i}, c_{n-i+1}, \ldots, c_{n-2})	& \\
(c_0, \Delta, \ldots, c_{n-i-1}, c_{n-i}, c_{n-i+1}, \ldots, c_{n-2})	& \\
(c_0, \ldots, c_{n-i-1}, \Delta, c_{n-i},  c_{n-i+1}, \ldots, c_{n-2}).	&
\end{align*}
The the subsequences from $c_{n-i}$ to $c_{n-2}$ on the right are all the same. This explains why $a_i^{(i+2)}$ to $a^{(n-1)}_i$ are all identical. Moreover, the value is the same as the component $a_i^{(n)}$ in the representation of $(c_0,c_1,\ldots, c_{n-2})$.

(iii) The task is to compare the representation of
$$
	(c_0,c_1,\ldots, c_{n-2}, \Delta)^{-1},
$$
and
$$
	(c_0,c_1,\ldots, c_{n-2})^{-1}.
$$
We want to show that $a_{i-1}^{(n)}$ and $a_i$ are either equal or differ by 1, and the difference is precisely equal to $b_i$. We can see this by using the procedure before Example~\ref{ex:e} in the calculation of~$a_i$. To determine the component $a_j$ in $R(\rho)$, we cyclically shift the permutation $\rho$ such that the first symbol is $n-1-i$. If the symbol $n-1$ is between the first symbol $n-1-i$ and the symbol $n-2-i$, then $a^{(n)}_{i-1}$ is equal to $a_i+ 1$, and in this case we have $b_i=1$. Otherwise, $a_{i-1}^{(n)}$ is the same as $a_i$.

(iv) It is a consequence of parts (ii) and (iii), plus some book-keeping.
\end{proof}

The value of $a^{(j)}_j$, for $j=0,1,\ldots, n-1$, is of particular importance.  We may call them the pivots.

\begin{lemma} \label{lemma:sum}
 For $i=0,1,\ldots, n-2$,
\begin{equation}
	a^{(i)}_i + a^{(i+1)}_i = i+2.
	\label{eq:sum}
\end{equation} 
\end{lemma}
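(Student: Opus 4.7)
The plan is to apply Theorem~\ref{thm:R_using_inverse} directly to the two inserted-sequence forms of $(\pi^{(i)})^{-1}$ and $(\pi^{(i+1)})^{-1}$ described at the start of this section. Since these two sequences differ only by shifting $\Delta$ one position to the left, the two counts defining $a_i^{(i)}$ and $a_i^{(i+1)}$ should turn out to be complementary counts over the same $i$-element set, and their sum will collapse cleanly.

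First I will unpack $a_i^{(i)}$. Applying \eqref{eq:alpha} to the sequence $(\pi^{(i)})^{-1}=(c_0,\ldots,c_{n-i-2},\Delta,c_{n-i-1},\ldots,c_{n-2})$, the position $n-1-i$ carries $\Delta$, the position $n-2-i$ carries $c_{n-2-i}$, and for $k=0,1,\ldots,i-1$ the position $n-1-k$ carries $c_{n-2-k}$; the index $k=i$ trivially contributes $1$. Writing $p:=c_{n-2-i}$, $q:=\Delta$, and $S:=\{c_{n-i-1},\ldots,c_{n-2}\}$ (an $i$-element subset of $[n]$ disjoint from $\{p,q\}$ by distinctness of the components of $\rho^{-1}$), this gives
$$
a_i^{(i)} = 1 + \bigl|\{\, c \in S : [c - q \bmod n] \leq [p - q \bmod n] \,\}\bigr|.
$$
An entirely analogous calculation for $(\pi^{(i+1)})^{-1}$ --- in which the values at positions $n-1-i$ and $n-2-i$ are swapped, becoming $c_{n-2-i}$ and $\Delta$ respectively --- yields
$$
a_i^{(i+1)} = 1 + \bigl|\{\, c \in S : [c - p \bmod n] \leq [q - p \bmod n] \,\}\bigr|.
$$

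The combinatorial heart of the proof is then a one-line observation: two distinct elements $p\neq q$ partition $[n]\setminus\{p,q\}$ into two disjoint cyclic arcs, namely the open arc from $q$ to $p$ (whose elements are exactly those satisfying the first inequality) and the open arc from $p$ to $q$ (whose elements satisfy the second). Hence each $c\in S$ lies in exactly one of these arcs, so the two indicator sums add to $|S|=i$, and $a_i^{(i)}+a_i^{(i+1)}=2+i$. The main obstacle is purely bookkeeping: carefully tracking which position in each inserted sequence carries $\Delta$ versus an original $c_k$, and verifying the trivial contribution from $k=i$ in each count; once the two expressions above are in hand, the cyclic-arc partition delivers the conclusion immediately.
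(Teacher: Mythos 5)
Your proof is correct and follows essentially the same route as the paper: both arguments express $a_i^{(i)}$ and $a_i^{(i+1)}$ as one plus a count over the same $i$-element set $\{c_{n-1-i},\ldots,c_{n-2}\}$ and then observe that the two defining conditions are complementary, so the counts sum to $i$. The only cosmetic difference is that you apply \eqref{eq:alpha} to both inserted sequences and argue the complementarity directly via the two cyclic arcs determined by $\Delta$ and $c_{n-2-i}$, whereas the paper invokes the alternative form \eqref{eq:alpha2} for the second count; your phrasing is, if anything, slightly cleaner.
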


\begin{proof}  
We observe that the symbols at locations $n-2-i$ and $n-1-j$ in $(\pi^{(i)})^{-1}$ and  $(\pi^{(i+1)})^{-1}$ are the same, but in different order,
\begin{align*}
	(\pi^{(i)})^{-1} &= (\ldots, c_{n-2-i}, \Delta , \underbrace{c_{n-1-i}, \ldots, c_{n-2}}_{i}), \\
	(\pi^{(i+1)})^{-1} &= (\ldots, \Delta, c_{n-2-i}, \underbrace{c_{n-1-i}, \ldots, c_{n-2}}_{i}). 
\end{align*}
In both sequences, the $i-1$ numbers on the right part are the same. By \eqref{eq:alpha}, $a^{(i)}_i$ is equal to one plus the number of indices $k$ between 0 and $i-1$ that satisfy
$$
[c^{(i)}_{n-1-k} -  \Delta \bmod n] \leq [c_{n-2-i} -\Delta \bmod n].
$$

On the other hand, by \eqref{eq:alpha2},  $a^{(i)}_i$ is equal to one plus the number of indices $k$ between 0 and $i-1$ that satisfy
$$
[c^{(i)}_{n-1-k} -\Delta \bmod n] \leq [\Delta - c_{n-2-i}  \bmod n].
$$
Each $k\in\{0,1,\ldots, i-1\}$ must satisfy exactly one of the above two inequalities. Therefore, the sum $a^{(i)}+a^{(i+1)}$ is equal to $i+2$.
\end{proof}

The next lemma illustrates how to recursively compute the sequence $a^{(i)}_i$, for $i\geq 0$.
	
\begin{lemma}	\label{lemma:key}
For $i=0,1,\ldots, n-2$, we have
$$
a_{i+1}^{(i+1)} = a^{(i)}_i + a_{i+1} - (i+2)b_{i+1}.
$$
\end{lemma}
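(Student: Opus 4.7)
The plan is to prove the identity by a direct computation using Theorem~\ref{thm:R_using_inverse}, followed by a case analysis on the value of $b_{i+1}\in\{0,1\}$. The point is that once the three quantities $a^{(i)}_i$, $a^{(i+1)}_{i+1}$, and $a_{i+1}$ are written out as counts of indices satisfying cyclic-order inequalities modulo~$n$, the identity reduces to a short combinatorial manipulation.

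First I set up convenient notation. Writing $\rho^{-1}=(c_0,c_1,\ldots,c_{n-2},\Delta)$, I define $h_k \triangleq [c_{n-2-k}-\Delta \bmod n]$ for $k=0,\ldots,n-2$; these are $n-1$ distinct nonzero residues in $\{1,\ldots,n-1\}$. Inspecting the definition~\eqref{eq:b} of $b_{i+1}$, I observe the pivotal equivalence
\begin{equation*}
b_{i+1}=1 \quad\Longleftrightarrow\quad h_{i+1}<h_i,
\end{equation*}
which is what will drive the case split.

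Next, I apply Theorem~\ref{thm:R_using_inverse} to each of $(\pi^{(i)})^{-1}$, $(\pi^{(i+1)})^{-1}$, and $\rho^{-1}$, identifying each relevant coordinate with the appropriate $c_{n-2-\ell}$ or $\Delta$. Because the position of $\Delta$ differs in the three vectors, the index $k$ corresponding to $\Delta$ enters the count differently. A routine identification produces
\begin{equation*}
a^{(i)}_i = 1 + |\{k:0\le k\le i-1,\ h_k\le h_i\}|, \qquad a^{(i+1)}_{i+1} = 1 + |\{k:0\le k\le i,\ h_k\le h_{i+1}\}|,
\end{equation*}
\begin{equation*}
a_{i+1} = 1 + b_{i+1} + |\{m:0\le m\le i-1,\ [h_m-h_i\bmod n]\le [h_{i+1}-h_i\bmod n]\}|,
\end{equation*}
where the extra $b_{i+1}$ term in $a_{i+1}$ comes from the $k=0$ slot, in which $c_{n-1}=\Delta$ is involved.

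The final step is the case split on $b_{i+1}$. If $b_{i+1}=0$, then $h_i<h_{i+1}$, the modular inequality in $a_{i+1}$ collapses to $h_i<h_m\le h_{i+1}$, and combining the two interval-counts in $a^{(i)}_i$ and $a_{i+1}$ reproduces $a^{(i+1)}_{i+1}$ exactly, matching the claim with right-hand side $a^{(i)}_i+a_{i+1}$. If $b_{i+1}=1$, then $h_{i+1}<h_i$, and modular wrap-around turns the condition in $a_{i+1}$ into the \emph{union} of two arcs $\{h_m\le h_{i+1}\}\cup\{h_m>h_i\}$. I rewrite this union as the complement in $\{0,1,\ldots,i-1\}$ of the middle interval $\{h_{i+1}<h_m<h_i\}$, which trades the union count for $i$ minus a single-interval count; after cancellation with the count in $a^{(i)}_i$, one gets $a^{(i)}_i+a_{i+1}-a^{(i+1)}_{i+1}=i+2=(i+2)b_{i+1}$, as required.

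The main obstacle is the $b_{i+1}=1$ case: the set of $m$ satisfying the modular inequality is a two-arc union, and one must rewrite it as the complement of a single interval before it can be merged cleanly with $a^{(i)}_i$. The $b_{i+1}=0$ case, by contrast, involves no wrap-around and is essentially immediate.
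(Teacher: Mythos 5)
Your proof is correct and follows essentially the same route as the paper's: both express $a^{(i)}_i$, $a^{(i+1)}_{i+1}$ and $a_{i+1}$ as counts of symbols lying in cyclic arcs anchored at $\Delta$, and split into cases according to whether the arc from $\Delta$ to $c_{n-3-i}$ wraps past $c_{n-2-i}$, i.e.\ according to $b_{i+1}$. The only cosmetic difference is that you expand $a_{i+1}$ directly from Theorem~\ref{thm:R_using_inverse} and absorb the $\Delta$-term as the extra $b_{i+1}$, whereas the paper routes through the auxiliary quantity $a^{(i+2)}_i = a_{i+1}-b_{i+1}$ supplied by Lemma~\ref{lemma:easy}(ii)--(iii).
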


\begin{proof} The main proof idea is to find a relationship between $a^{(i)}_i$, $a^{(i+1)}_{i+1}$, and $a^{(i+2)}_{i}$. The inverse of $\pi^{(i)}$, $\pi^{(i+1)}$, and $\pi^{(i+2)}$ are, respectively
\begin{align*}
(\pi^{(i)})^{-1} =(c_0, \ldots, c_{n-4-i}, y,x,\Delta, \underbrace{c_{n-1-i}, \ldots, c_{n-2}}_i) ,& \\
(\pi^{(i+1)})^{-1} = (c_0, \ldots, c_{n-4-i}, y,\Delta,x, \underbrace{c_{n-1-i}, \ldots, c_{n-2}}_i), & \\
(\pi^{(i+2)})^{-1} =(c_0, \ldots,c_{n-4-i},  \Delta, y,x,\underbrace{c_{n-1-i}, \ldots, c_{n-2}}_i). & 
\end{align*}	
To simplify notation, we will write $x$ to stand for $c_{n-2-i}$ and $y$ for $c_{n-3-i}$. We let $\mathcal{S}$ denote the set
$$
\mathcal{S}_i \triangleq \{ c_{n-1-i}, c_{n-i}, \ldots, c_{n-2}\},
$$
consisting of the $i$ entries on the right of the sequences.

By \eqref{eq:alpha}, we know that $a^{(i)}_i$ is equal to one plus the number of integers $z$ in $\mathcal{S}_i$ that satisfy
\begin{equation}
[z - \Delta \bmod n] \leq [x - \Delta \bmod n].
\label{eq:solution1}
\end{equation}

Express  $a^{(i+2)}_i$ as one plus the number of integers $z$ in $\mathcal{S}_i$ that satisfy
\begin{equation}
[z - x \bmod n] \leq [y - x \bmod n].
\label{eq:solution2}
\end{equation}
The value of $a^{(i+1)}_{i+1}$ can be written as one plus the number of integers $w$ in $\mathcal{S}_i \cup \{x\}$ that satisfy
\begin{equation}
[w - \Delta  \bmod n] \leq [y - \Delta \bmod n].
\label{eq:combined}
\end{equation}

We consider two cases below.

In the first case, we assume that $b_{i+1}=0$. In this case, the difference $y-\Delta$ is less than $n$, and we have
$$
[x - \Delta \bmod n]+ [y - x \bmod n] =
[y - \Delta \bmod n].
$$
Because $x$ satisfies the condition in \eqref{eq:combined}, we have
$$	a^{(i+1)}_{i+1} = a^{(i)}_i + a^{(i+2)}_i,
$$
and by part (iii) in Lemma~\ref{lemma:easy}, we obtain
\begin{equation}
	a^{(i+1)}_{i} = a^{(i)}_i + a_{i+1}.
	\label{eq:part1}
\end{equation}

In the second case, we assume that $b_{i+1}=1$. This means that
$$
[x - \Delta \bmod n]+ [y - x \bmod n] =
[y - \Delta \bmod n] + n.
$$
The total number of solutions to \eqref{eq:solution1} and \eqref{eq:solution2} is the number of solution to \eqref{eq:combined} plus~$i$. Moreover, because $b_{i+1}=1$, $x$ does not satisfy \eqref{eq:combined}. We thus obtain
$$
a^{(i+1)}_{i+1} = a^{(i)}_i + a^{(i+2)}_i - (2+j) + 1
$$
We now apply part (iii) of Lemma~\ref{lemma:easy} to get
\begin{equation}
	a^{(i+1)}_{i+1} = a^{(i)}_i + a_{i+1} - (2+j) 
	\label{eq:part2}
\end{equation}

Finally, we combine \eqref{eq:part1} and \eqref{eq:part2} into a single equation,
$$	a^{(i+1)}_{i+1} = a^{(i)}_i + a_{i+1} - (2+i)b_{i+1} 
$$
This complete the proof of Lemma~\ref{lemma:key}.
\end{proof}

Using Lemma~\ref{lemma:key}, we can compute the values of $a_i^{(i)}$, for $i=0,1,\ldots, n-1$, by
\begin{equation}
	a^{(i)}_i = 1 + \sum_{j=1}^{i} a_j - \sum_{j=1}^i (j+1)b_j.
	\label{eq:pivot}
\end{equation}

We now prove Theorem~\ref{thm:distint_values} by mathematical induction. We first consider the case $i=1$ by comparing the parity of $R(\pi^{(0)}) = [a_0,a_1,\ldots, a_{n-1}]$ and  $R(\pi^{(1)}) = [a^{(1)}_0, a^{(1)}_1, \ldots,  a^{(1)}_{n-1}]$. The two vectors  $R(\pi^{(0)})$ and  $R(\pi^{(1)})$ only differ in the second component. Putting $i=1$ in \eqref{eq:pivot}, we get 
$$
a_1^{(1)} = 1 + a_1^{(0)} - 2b_1.
$$
Therefore,
$$
\textit{parity}(R(\pi^{(1)}))
-\textit{parity}(R(\pi^{(0)})) =
a_1^{(1)} - a_1^{(0)} =
1 -  2b_1  = (1-b_1) - b_1,
$$
which is the same as \eqref{eq:distinct_values} with $i$ substituted by~1.

We next consider \eqref{eq:distinct_values} for $i>1$. From parts (i) and (ii) in Lemma~\ref{lemma:easy}, the two vectors $\textit{parity}(R(\pi^{(i)}))$ and
$\textit{parity}(R(\pi^{(i+1)}))$ differ in position $i-1$, $i$ and $i+1$. We need to compute six values  $a^{(i)}_{i-1}$, $a^{(i)}_{i}$, $a^{(i)}_{i+1}$,
$a^{(i+1)}_{i-1}$, $a^{(i+1)}_{i}$, and $a^{(i+1)}_{i+1}$.

By Lemma~\ref{lemma:sum} and part (iv) of Lemma~\ref{lemma:easy}
%
%
\begin{align*}
& \phantom{=} \textit{parity}(R(\pi^{(i+1)})) - \textit{parity}(R(\pi^{(i)})) \\
&= (a^{(i+1)}_{i+1} + a^{(i+1)}_{i}) + a^{(i+1)}_{i-1} 
- a^{(i)}_{i+1} - a^{(i)}_{i} - a^{(i)}_{i-1} \\
& =  (a^{(i+1)}_{i+1} - a^{(i)}_{i} ) - a^{(i)}_{i+1} + a^{(i+1)}_{i-1} +  a^{(i+1)}_{i}  - a^{(i)}_{i-1}  \\
&=  a^{(i+1)}_{i+1} - a^{(i)}_{i}   - a^{(i)}_{i+1} +(a^{(i-1)}_{i} - b_i) +(i+2-a^{(i)}_i) - (i+1-a^{(i-1)}_{i-1}) \\
&=  (a^{(i+1)}_{i+1} - a^{(i)}_{i})
-  (a^{(i)}_{i} - a^{(i-1)}_{i-1})
+(a^{(i-1)}_{i} - a^{(i)}_{i+1})
 + 1 -b_i .
\end{align*}

By applying Lemma~\ref{lemma:easy} and Lemma~\ref{lemma:key}, we can simplify it to
\begin{align*}
&a_{i+1} - (i+2)b_{i+1} - (a_i - (i+1)b_i) + a_i - a_{i+1} + 1-b_i \\
& = 1 + ib_i - (i+2)b_{i+1}.
\end{align*}

Apply the induction hypothesis,
\begin{align*}
	\textit{parity}(R(\pi^{(i+1)})) &=
	\textit{parity}(R(\pi^{(i)})) +
(	\textit{parity}(R(\pi^{(i+1)}))  - 	\textit{parity}(R(\pi^{(i)})) )	\\
& =	\textit{parity}(R(\pi^{(0)})) + i(1-b_i) - \sum_{j=1}^i b_j + (1 + ib_i - (i+2)b_{i+1} )\\
&= i + 1 - i b_{i+1} - b_{i+1} - \sum_{j=1}^{i+1} b_j \\
&= (1+i)(1-b_{i+1}) - \sum_{j=1}^{i+1} b_j.
\end{align*} 
This completes the proof of Theorem~\ref{thm:distint_values}.

\section{Encoding and decoding algorithm}
\label{sec:algorithm}

A key in the encoding and decoding algorithms is the computation of the function $R$ and its inverse. A naive implementation would require $O(n^2)$ steps. With the use of more advanced data structure, we can reduce the time complexity significantly.

We first consider the computation of $R$. Let $\rho$ be a given permutation $S_n$. We will obtain the components in $R(\rho) = [a_0,a_1,\ldots, a_{n-1}]$ in reverse order, starting from $a_{n-1}$. Let $\iota$ denote the inverse of $\rho$, which can be computed in $O(n)$ time. The last component $ a_{n-1}$ in $R(\rho)$ is equal to $n - \iota(0)$.

\begin{algorithm}
	\caption{Calculation of $R(\rho)$}
	\label{alg:red_black}
	{\bf Input:} Integer $n$, a permutation $\rho$ in $S_n$. \\
	{\bf Output:} A vector in $V_n$.
	\begin{algorithmic}[1]
		\STATE Compute $\iota \gets \rho^{-1}$
		\STATE Define $\iota(-1)$ as $n$
		\STATE Initialize an empty red-black tree
		\FOR{$j = 0,1,\ldots, n-1$}
		\IF{$\iota(j-1) > \iota(j)$}
		\STATE $x \gets$ no. of symbols in $\{0,1,\ldots, \iota(j-1)-1\}$ that are in the red-black tree
		\STATE $y \gets$ no. of symbols in $\{0,1\ldots, \iota(j)-1\}$ that are in the red-black tree
		\STATE Let $a_{n-1-j} \gets (\iota(j-1) - \iota(j) ) - (x-y)$
		\ELSE
		\STATE $x \gets$ no. of symbols in $\{\rho(0), \rho(1) , \ldots, \rho(j-1)-1\}$ that are  in the red-black tree
		\STATE $y \gets$ no. of symbols in $\{\rho(j), \rho(j+1)\ldots, \rho(n-1)\}$ that are in the red-black tree
		\STATE Let $a_{n-1-j} \gets [\iota(j-1) - \iota(j) \bmod n] - (x+y)$
		\ENDIF
		\STATE Insert $\iota(j)$ to the red-black tree
		\ENDFOR
		\RETURN $(a_0,a_1,\ldots, a_{n-1})$
	\end{algorithmic}
\end{algorithm}

For other values of $j$, we recall that we can compute $a_j$ by first cyclically rotating $\rho$ such that $n-1-j$ is located at the first position on the left. The value of $a_j$ is then the number of symbol to the left of symbol $n-2-j$ that is larger than $n-2-j$. For ease of computation, we need not do the cyclic rotation, but we can use the information in the inverse permutation $\iota$. The total number of symbols between $n-1-j$ and $n-2-j$ is equal to
$$
[\iota_{n-1-j} - \iota_{n-2-j} \bmod n].
$$
Hence, the remaining task is to count the number of symbols that are between symbol $n-1-j$ and symbol $n-2-j$. This can be done by using the data structure called red-black tree, which is a binary tree that are close to balanced. Each node has a label. The label of a node is larger than the label of the left child, and is smaller than the label of the right child. We can build a red-black tree by inserting numbers one by one, and the insertion operation takes $O(\log n)$ time. 
Detail of the implementation of red-black tree can be found in~\cite[Chapter 13]{IntroductionToAlgorithms}. Using this data structure, we can calculate the representation $R(\rho)$ using Algorithm~\ref{alg:red_black}. 

We note that the operations in step 6, 7, 10, 11 and 14 can be performed in $O(\log n)$ time. The time complexity of the whole algorithm is $O(n\log n)$.

In order to compute the inverse function $R^{-1}(\rho)$ efficiently, we introduce another data structure from~\cite{Bille2017}, called ``fast dynamic array''.  This is a data structure that implements a 1-dimensional array. We can insert a new cell in the array in $O(n^\epsilon)$ time, where $\epsilon$ is a fixed positive parameter, while the access time is $O(1)$.

To circumvent the difficulty of cyclic shift, we will replicate a permutation and double the number of symbols. This will not affect the order of time complexity. We illustrate the computation using an example. Suppose $n=5$ and $\alpha$ is the vector 
$$(a_0,a_1,a_2,a_3,a_4) = (1,2,1,4,3).$$ We start from the initial string of symbols
$$
3434
$$ 
and we will add the other symbols one by one. The symbol 2 is inserted after 4, because $a_1$ is equal to 2,
$$
342342.
$$
Since $a_2=1$, the symbol 1 is inserted after symbol 2,
$$
34213421.
$$
Symbol 0 is put after the symbol 2 as $a_3=4$, and we obtain
$$
3420134201.
$$
Finally, because $a_5=2$, symbol 0 should be in the middle of the permutation. Hence, the permutation that corresponds to $\alpha = (1,2,1,4,3)$ is
$$
(4,2,0,1,3).
$$

We implement this idea in Algorithm~\ref{alg:inverseR}, with the fast insertion as in~\cite{Bille2017}. As in the rest of the paper, the index of array $A$ starts from~0. In the algorithm below, the variable $x$ is location of the newly inserted symbol.

\begin{algorithm}
	\caption{Computation of  $R^{-1}(\alpha)$}
	\label{alg:inverseR}
	{\bf Input:} Integer $n$, a vector $\mathbf{\alpha} = (a_0,a_1,\ldots, a_{n-1})$ in $V_n$ \\
	{\bf Output:} A permutation in $S_n$.
	\begin{algorithmic}[1]
\STATE  Initialize array $A$ to $[n-2, n-1, n-2, n-1]$
\STATE $L=2$ \qquad \qquad // $L$ is the length of array $A$ divided by 2
\STATE $x \gets 1$ \qquad \qquad // Initially, $x$ is a pointer to symbol $n-1$
\FOR{$j= n-3, n-2, \ldots, 0$}
\STATE $x \gets x+a_{n-2-j}$
\STATE $L \gets L+1$
\STATE Insert symbol $j$ at position $x$ 
\IF{$x \geq L$}
  \STATE $x \gets x-L$
  \STATE Insert symbol $j$ at position $x$
\ELSE
  \STATE Insert symbol $j$ at position $x+L$
\ENDIF
\ENDFOR	
\STATE $i \gets x - a_{n-1}+1$
\IF {$x - a_{n-1} +1 < 0 $}
\STATE $i \gets i+n$
\ENDIF 
\RETURN $(A[i], A[i+1], \ldots, A[i+n-1])$
	\end{algorithmic}
\end{algorithm}

In Algorithm~\ref{alg:inverseR},  the insertion operation is performed $O(n)$ times. If each insertion takes $O(n^\epsilon)$ time, the time complexity of Algorithm~\ref{alg:inverseR} is $O(n^{1+\epsilon})$, for some parameter $\epsilon>0$.

Using the inverse function $R^{-1}$, we can perform encoding as in Algorithm \ref{alg:encode}. The data to be encoded is drawn from the Cartesian product 
$$
\{1,2\} \times \{1,2,3\} \times \cdots \times \{1,2,\ldots, n-1\}.
$$
The most time-consuming step is the computation of $R^{-1}(\alpha)$ in step~3, which takes $O(n^{1+\epsilon})$ time. Since $\epsilon$ can be any positive real number, the overall complexity is close to linear time.

\begin{algorithm}
	\caption{Encoding algorithm}
	\label{alg:encode}
	{\bf Input:} Integer $n$, $T$ \\
	\phantom{Input:\ }
	  $a_j\in\{1,2,\ldots, j+1\}$ for $j=1,2,\ldots, n-2$ \\
	{\bf Output:} A permutation in $S_n$.
	\begin{algorithmic}[1]
		\STATE  Set $a_{n-1} \gets \langle T - \sum_{j=1}^{n-2} a_j \rangle_n$ \qquad // Set the value of $a_{n-1}$ such that the parity is $T$
		\STATE Let $\alpha$ be the vector $[1,a_1,a_2,\ldots, a_{n-1}]$
		\STATE Compute  $\rho \gets R^{-1}(\alpha)$
		\RETURN the permutation $\rho^{-1}$  \qquad // The inverse of $R^{-1}(\alpha)$ is the codeword
	\end{algorithmic}
\end{algorithm}

Decoding algorithm is in Algorithm~\ref{alg:decode}. We first determine the missing symbol, denoted by $\Delta$ in the algorithm, and append it at the end of vector $\mathbf{v}$. This will form a permutation $\rho$. If the parity of $\rho$ is equal to $T$, then we will skip steps 7 to 20, because $\rho$ already gives the correct answer. Otherwise, if the parity of $\rho$ is not $T$, then we step-wise decrease the value of $\rho_\Delta$ until the parity is equal to $T$. The parity of $\rho$ within the for-loop is computed via the sequence $b_1, b_2,\ldots, b_{n-1}$, which is defined as in~\eqref{eq:b}. In Step 16, we quit the for-loop if the parity is equal to~$T$.

We note that the representation function $R$, which is the bottleneck of the time complexity, is computed two times. The total time complexity of Algorithm~\ref{alg:decode} is $O(n \log n)$.

\begin{algorithm}
	\caption{Decoding algorithm for one deletion}
	\label{alg:decode}
	{\bf Input:}  Integers $n$, $T$ \\
	\phantom{Input:} Vector $\mathbf{v} = (v_0, v_1, v_2,\ldots, v_{n-2})$, with distinct components in $\{0,1,\ldots, n-1\}$. \\
	{\bf Output:} $a_j\in\{1,\ldots, j+1\}$,  for $j=1,2,\ldots, n-2$.
	\begin{algorithmic}[1]
\STATE Let $\Delta$ be the integer in $\{0,1,\ldots, n-1\}$ that is not a component in $\mathbf{v}$
\STATE Let $\pi$ be the permutation $(v_0, v_1,\ldots, v_{n-2}, \Delta)$
\qquad // $\Delta$ is the deleted symbol
\STATE Set $\rho \gets \pi^{-1}$ \quad // Compute the inverse of $\pi$
\STATE Calculate  $p \gets \textit{parity}(R(\rho))$ 
\STATE $m_1 \gets p$, $m_2\gets p$ \quad // Initialize $m_1$ and $m_2$ to the parity $p$ of $R(\rho)$
\STATE  Compute $b_j$, for $j=1,2,\ldots, n-1$ by the formula in  \eqref{eq:b}
\IF {$p \not\equiv T\bmod n$}
  \FOR{$k= 1, 2, ,3 \ldots, n-1$}
  \STATE $i \gets \pi_{n-1-k}$
  \STATE Exchange the value of $\rho_i$ and $\rho_\Delta$
  \IF {$b_k = 1$} 
    \STATE $m_1 \gets m_1-1$ \qquad // Using  Theorem~\ref{thm:distint_values}, update the parity after the exchange 
  \ELSE
    \STATE $m_2 \gets m_2+1$
  \ENDIF
  \IF {$m_1 \equiv  T \bmod n$ or $m_2 \equiv T \bmod n$}
    \STATE Break   \quad // proceed to step 21 if we have the correct parity
  \ENDIF
  \ENDFOR
\ENDIF
\STATE Compute $\alpha \gets R(\rho)$ \quad // Compute the representation of permutation $\rho$
\STATE Let $a_j \gets $ component $j$ in $\alpha$, for $j=1,2,\ldots, n-2$
\RETURN $(a_1,a_2,\ldots, a_{ n-2})$ \quad // Components $a_1$ to $a_{n-2}$ in $\alpha$ are the data
	\end{algorithmic}
\end{algorithm}

\section{Concluding Remarks}

We consider a special representation $R(\pi)$ that is based on block transposition. Using this representation, we can give an  alternate description of Levenshtein's perfect permutation code, and devise quasi-linear time encoding and decoding algorithm that can correct one deletion. In fact, the computation  of the function $R(\pi)$ is amenable to parallel processing. If parallel computing is available, we can further reduce the time complexity to $O(n)$.


\end{document}